\theoremstyle{definition}
\newtheorem{theorem}{Theorem}
\newcommand{\trace}{\mathtt{tr}}
\newcommand{\lin}{\mathtt{lin}}
\newcommand{\Expectation}{\mathbb{E}}
\renewcommand{\Re}{\mathbb{R}}
\title{Covariance Steering for Systems Subject to Unknown Parameters}
\author{Jacob Knaup and Panagiotis Tsiotras, \IEEEmembership{Fellow, IEEE}
\thanks{J. Knaup is with the School of Interactive Computing, College of Computing and
the Institute for Robotics and Intelligent Machines,
Georgia Institute of Technology,  Atlanta, GA 30332--0250 USA (e-mail: jacobk@gatech.edu)}
\thanks{P. Tsiotras is with the School of Aerospace Engineering and the
 Institute for Robotics and Intelligent Machines,
Georgia Institute of Technology,  Atlanta, GA 30332--0150 USA (e-mail: tsiotras@gatech.edu)}
}
\begin{document}

\maketitle

\begin{abstract}
    This work considers the optimal covariance steering problem for systems subject to both additive noise and uncertain parameters which may enter multiplicatively with the state and the control. The unknown parameters are modeled as a constant random variable sampled from a distribution with known moments. The optimal covariance steering problem is formulated using a moment-based representation of the system dynamics, which includes dependence between the unknown parameters and future states, and is solved using sequential convex programming. The proposed approach is demonstrated numerically using a holonomic spacecraft system and an autonomous vehicle control application. 
\end{abstract}

\begin{IEEEkeywords}
Stochastic optimal control, linear uncertain systems, covariance steering.
\end{IEEEkeywords}

\section{Introduction}

While existing stochastic optimal control methods typically require detailed knowledge of the system being optimized, many systems arising in practice include uncertain parameters. 
In such cases, an estimate of these parameters (e.g., a nominal value and the degree of uncertainty) may be available, or it may only be known that the parameters lie in particular intervals. 
In such cases, these parameters may be modeled as a constant random variable sampled from a particular distribution. 
This formulation allows for a straightforward integration with stochastic additive disturbances and allows the parameters to be sampled from distributions with bounded or unbounded support. 

In this paper, we examine the problem of steering a stochastic linear system in finite-time from an initial distribution characterized by its first two moments to a terminal distribution with given mean and covariance, when the system is subject to parametric uncertainties (i.e., the disturbances enter both multiplicatively with the state and control, as well as additively).
The covariance steering problem has previously been studied for both the infinite horizon \cite{hotz1987covariance, iwasaki1992quadratic, xu1992improved} and the finite horizon \cite{goldshtein2017finite, chen2015optimal1, chen2015optimal2, chen2018optimal} cases and in the presence of chance constraints \cite{bakolas2016optimal, okamoto2018optimal, okamoto2019input}, for systems subject to purely additive Gaussian i.i.d. disturbances. The literature on multiplicative disturbances is much less developed.
In particular, the authors of \cite{balci2022covariance, knaup2023computationally} also investigated numerical solutions for the covariance steering problem with parametric uncertainties. However, the work of \cite{balci2022covariance, knaup2023computationally} assumes that the disturbances are independently, identically distributed in time, whereas this work assumes that the disturbances are time-invariant, which is a more realistic assumption for model uncertainty, as system parameters are typically unknown but constant. Furthermore, the proposed formulation allows for the dependence between prior states and the disturbance realization, whereas an assumption of state-disturbance independence is a key assumption enabling the approach of \cite{balci2022covariance, knaup2023computationally}. In particular, we make no assumptions of the underlying distribution of the disturbances beyond that of the moments being known (e.g., the disturbances may be sampled from a Gaussian distribution or a uniform distribution, depending on the available knowledge/uncertainty regarding the unknown parameters).

The proposed problem formulation has connections to the literature of robust control and set-based methods utilizing polytopic/ellipsoidal cross-sections, in particular, in the case the disturbances are assumed to be sampled from a uniform distribution. The robust control literature primarily considers unknown, but deterministic disturbances which may enter the dynamics both multiplicatively or additively, similar to the proposed problem \cite{houska2016short, fleming2014robust, kostousova2011polyhedral, evans2012robust, peschke2020tube}. The difference being that set-based methods upper bound the reachable set of the state for all possible disturbance realizations, requiring the disturbances to be drawn from bounded sets (i.e., distributions with bounded support such as the uniform distribution) \cite{saltik2018outlook}. Stochastic approaches, on the other hand, have the advantage of being able to deal with disturbances having unbounded support by considering the likelihood of disturbance realizations and by imposing probabilistic bounds \cite{mesbah2016stochastic, saltik2018outlook}. 

The robust control literature typically considers minimizing either a nominal (optimistic) cost or a worst-case scenario (pessimistic) cost, where the first case is the most computationally efficient but may not be robust to disturbances, while the second case is more computationally demanding due to the need of solving a min-max problem; robust control designs may also lead to conservative trajectories because of the low likelihood of the worst-case scenario \cite{saltik2018outlook}. The proposed stochastic approach minimizes the expected cost and considers both the nominal trajectory and the level of uncertainty, providing a balance between the two primary costs used in robust control \cite{mesbah2016stochastic, saltik2018outlook, houska2016short}. 
Moreover, since set-based methods used in the robust control literature seek to bound the reachable set of the state, they are restricted to distributions with bounded support (e.g., uniform distribution); whereas, the proposed approach can handle a mixture of disturbances drawn from bounded and unbounded distributions (e.g., the multiplicative disturbances may be uniformly distributed while the additive disturbances and the initial state can be normally distributed).

Prior works on covariance steering have utilized semidefinite programming (SDP) to solve the optimal covariance steering problem \cite{bakolas2016optimal, goldshtein2017finite, bakolas2018finite, okamoto2018optimal, benedikter2022convex, balci2022exact}. However, the problem considered in this work includes a state-disturbance dependence which greatly complicates the moment dynamics and prevents the use of prior techniques for the derivation of an equivalent (or even conservative) semidefinite program. Instead, the proposed approach utilizes sequential convex programming (SCP), a powerful tool for constrained nonconvex optimization, which has been shown to have convergence guarantees to a local optimum under mild conditions \cite{dinh2010local, bonalli2022sequential} (see, for example, \cite{houska2016short}, \cite{dyro2021particle}, \cite{bonalli2019gusto}). It is shown that if the proposed solution method converges to a stationary point, then a solution of the nonconvex covariance steering problem has been found.

Prior works have applied covariance steering to systems subject to additive uncertainties for robotics tasks such as path planning \cite{okamoto2019optimal, zheng2022belief} and control of spacecraft \cite{ridderhof2018uncertainty, goyal2021optimal, ridderhof2022chance}. The proposed approach is validated through numerical simulations of the planar motion of a spacecraft subject to an uncertain heading angle error and a vehicle with an uncertain constant velocity represented by the kinematic bicycle model performing a path following task. 

The remainder of this paper is structured as follows. In Section~\ref{sec:problem_formulation} we provide the system description and introduce the covariance steering problem. In Section~\ref{sec:covariance_steering_controller_design}, we first present a deterministic nonconvex moment-based formulation of the covariance steering problem and then present a strategy for solving the resulting nonconvex program using SCP, and provide theoretical assurances regarding the solution. Numerical results on two robotics applications are presented in Section~\ref{sec:numerical_results}. Due to space considerations, derivations of the state's moments are given in Appendices~A~and~B, and the linear approximations of the moment dynamics are given in Appendix~C.

\textit{Notation:} The notation for this paper is fairly standard. A random variable drawn from a normal distribution with mean $\mu$ and covariance matrix $\Sigma$ is denoted by $x \sim \mathcal{N}(\mu, \Sigma)$, and a variable drawn from a uniform distribution with bounds $a$ and $b$ is denoted by $\mathcal{U}(a, b)$. $\Expectation[\cdot]$ denotes the expectation operator, and $\Pr(x)$ denotes the probability of event $x$. $I_n$ denotes the $n \times n$ identity matrix, $\mathtt{diag}(a, \dots, b)$ denotes a square diagonal matrix with entries $a, \dots, b$ on the main diagonal, and $\mathtt{tr}(\cdot)$ denotes the trace operation. A symmetric positive (semi)-definite matrix is denoted by $M \succ 0$ ($M \succeq 0$).

\color{black}

\section{Problem Formulation} \label{sec:problem_formulation}

Consider the system 
\begin{align} \label{eq:stochastic_sys}
    x_{k+1} &= A x_k + B u_k + D w_k,
\end{align}
where $x_k \in \mathbb{R}^{n_x}$, $u_k \in \mathbb{R}^{n_u}$, and $w_k \in \Re^{n_w}$, where 
\begin{subequations}
    \begin{align}
        \Expectation[w_k] &= 0, \\
        \Expectation[(w_{k} &- \Expectation[w_{k}])(w_{k} - \Expectation[w_{k}])^\top] = I_{n_w}, \\
        \Expectation[w_{k_1} w_{k_2}^\top] &= \Expectation[w_{k_1}] \Expectation[w_{k_2}^\top] = 0, ~ \forall ~ k_1 \neq k_2. 
    \end{align}
\end{subequations}
Let the initial conditions be given as $\Expectation[x_0] = \mu_0$ and $\Expectation[(x_0 - \Expectation[x_0])(x_0 - \Expectation[x_0])^\top] = \Sigma_0$, where $\Sigma_0 \succeq 0$.
Additionally, the system matrices are comprised of a known component and a time-invariant stochastic component, which depends on a set of constant, but unknown parameters $\{p_1, \dots, p_m\}$, given by
\begin{align}
    A &= \bar{A} + \sum_{j=1}^{n_p} \tilde{A}_{j} p_{j}, \quad
    B = \bar{B} + \sum_{j=1}^{n_p} \tilde{B}_{j} p_{j}, \quad
    D = \bar{D} + \sum_{j=1}^{n_p} \tilde{D}_{j} p_{j},
\end{align}
where, for all $j = 1, \dots, n_p$, $p_j : \Omega \rightarrow \Re$ is a random variable with $\Expectation[p_j] = 0$. The zero-mean assumption is not restrictive because the mean can always be accounted for by adding an appropriate offset to $\bar{A}$, $\bar{B}$, and/or $\bar{D}$ accordingly. Additionally, we assume that $x_0$, $w_k$, and $p_{j}$ are all mutually independent for all $k = 0, \dots, N-1$ and $j = 1, \dots, n_p$, yielding
\begin{subequations}
    \begin{align}
        \Expectation[p_{j_1} p_{j_2}] &= \Expectation[p_{j_1}] \Expectation[p_{j_2}] = 0, j_1 \neq j_2, \\
        \Expectation[w_{k} p_{j}] &= \Expectation[w_{k}] \Expectation[p_{j}] = 0, ~ \forall ~ k = 0, 1, \ldots; j = 1, \ldots, {n_p}, \\
        \Expectation[x_0 p_{j_1} \ldots p_{j_{\ell_j}}] &= \mu_0 \Expectation[p_{j_1} \ldots p_{j_{\ell_j}}], \label{eq:initial_mean}\\
        \Expectation[(x_0 p_{j_1} \ldots p_{j_{\ell_j}} &- \Expectation[x_0 p_{j_1} \ldots p_{j_{\ell_j}}])(x_0 p_{i_1} \ldots p_{i_{\ell_i}} - \Expectation[x_0 p_{i_1} \ldots p_{i_{\ell_i}}])^\top] \nonumber\\
        &= \Expectation[(x_0 p_{j_1} \ldots p_{j_{\ell_j}})(x_0 p_{i_1} \ldots p_{i_{\ell_i}})^\top] - \Expectation[x_0 p_{j_1} \ldots p_{j_{\ell_j}}]\Expectation[x_0 p_{i_1} \ldots p_{i_{\ell_i}}]^\top \nonumber\\
        &= \Expectation[x_0 x_0^\top] \Expectation[p_{j_1} \ldots p_{j_{\ell_j}} p_{i_1} \ldots p_{i_{\ell_i}}] - \Expectation[x_0] \Expectation[x_0]^\top \Expectation[p_{j_1} \ldots p_{j_{\ell_j}}]\Expectation[p_{i_1} \ldots p_{i_{\ell_i}}] \nonumber\\
        &= (\Sigma_0 + \mu_0 \mu_0^\top)\Expectation[p_{j_1} \ldots p_{j_{\ell_j}} p_{i_1} \ldots p_{i_{\ell_i}}] - \mu_0 \mu_0^\top \Expectation[p_{j_1} \ldots p_{j_{\ell_j}}] \Expectation[p_{i_1} \ldots p_{i_{\ell_i}}], \label{eq:initial_cov}\\
        \Expectation[(x_0 p_{j_1} \ldots p_{j_{\ell_j}} &- \Expectation[x_0 p_{j_1} \ldots p_{j_{\ell_j}}])(p_{i_1} \ldots p_{i_{\ell_i}} - \Expectation[p_{i_1} \ldots p_{i_{\ell_i}}])^\top] \nonumber\\
        &= \mu_0 \Expectation[p_{j_1} \ldots p_{j_{\ell_j}} p_{i_1} \ldots p_{i_{\ell_i}}] - \mu_0 \Expectation[p_{j_1} \ldots p_{j_{\ell_j}}] \Expectation[p_{i_1} \ldots p_{i_{\ell_i}}], \label{eq:initial_xp_cov}
    \end{align}
\end{subequations}
where $\ell_j = 1, 2, \dots$ and $\ell_i = 1, 2, \dots$.
Finally, we assume that all moments of $p_j$ are known (e.g., as is the case if $p_j$ is Gaussian distributed with known variance or uniformly distributed with known bounds).
%
Contrary to most works on stochastic control of linear systems, we can no longer make the assumption that the state and disturbance realization at a given time-step are independent. That is,
\begin{align}
    \mathbb{E}[x_k p_j] \neq \mathbb{E}[x_k] \Expectation[p_j] = 0,
\end{align}
for $k = 0, 1, \dots, j = 1, \dots, {n_p}$. We will derive an expression for $\mathbb{E}[x_k p_j]$ in Section~\ref{sec:covariance_steering_controller_design}.

The state and control inputs in (\ref{eq:stochastic_sys}) are subject to a collection of linear chance constraints given by
\begin{subequations} \label{eq:state_and_input_const}
\begin{align}
	\Pr(\alpha_{x, i_x}^\top x_k \leq \beta_{x, i_x}) \geq 1 - \delta_{x, i_x}, \\ 
	\Pr(\alpha_{u, i_u}^\top u_k \leq \beta_{u, i_u}) \geq 1 - \delta_{u, i_u},
\end{align}
\end{subequations}
for all $i_x = 1, \dots, N_x$, $i_u = 1, \dots, N_u$, $k = 0, 1, \dots, N-1$, where $\alpha_{x,i_x}\in\Re^{n_x}$ and $\alpha_{u,i_u}\in\Re^{n_u}$ are constant vectors, $\beta_{x,i_x} \geq 0$ and $\beta_{u,i_u} \geq 0$ are constant scalars, and $\delta_{x, i_x}, \delta_{u, i_u} > 0$ are given maximal probabilities of constraint violation.

We wish to steer \eqref{eq:stochastic_sys} to a given final mean $\mu_F$ and covariance $\Sigma_F \succ 0$ at time $N$, such that 
\begin{align}
    \Expectation[x_N] = \mu_F, \quad \Expectation[(x_N - \Expectation[x_N])(x_N - \Expectation[x_N])^\top] = \Sigma_F,
\end{align}
while minimizing the cost function
\begin{align}\label{eq:cost_function}
    J(\mu_0, \Sigma_0; u_0, \dots, u_{N-1}) &= \Expectation\left[\sum_{k=0}^{N-1} \ell(x_k, u_k) \right].
\end{align}
In particular, we will investigate the case where $\ell(\cdot, \cdot)$ has the special form
\begin{align}\label{eq:stage_cost}
    \ell(x, u) &= x^\top Q x + u^\top R u,
\end{align}
where $Q \in \Re^{n_x \times n_x}$, $R \in \Re^{n_u \times n_u}$, $Q \succeq 0$, and $R \succ 0$.
The problem may thus be summarized as follows: given $\mu_0, \Sigma_0, \mu_F, \Sigma_F$, determine the control sequence $\pi = \{u_0, \dots, u_{N-1}\}$ which solves the following finite-time, optimal covariance steering problem
\begin{subequations}\label{prob:cs_prob}
\begin{align}
    \min_\pi \quad & J(\mu_0, \Sigma_0; \pi) = \Expectation\left[\sum_{k=0}^{N-1} x_k^\top Q x_k + u_k^\top R u_k \right], \label{eq:cs_cost}\\
    \text{subject to }& \nonumber\\
    &\Expectation[x_0] = \mu_0, \label{eq:cs_init_mean}\\
    &\Expectation[(x_0 - \Expectation[x_0])(x_0 - \Expectation[x_0])^\top] = \Sigma_0, \label{eq:cs_init_cov}\\
    &x_{k+1} = (\bar{A} + \sum_{j=1}^{{n_p}} \tilde{A}_{j} p_{j}) x_k + (\bar{B} + \sum_{j=1}^{{n_p}} \tilde{B}_{j} p_{j}) u_k + (\bar{D} + \sum_{j=1}^{{n_p}} \tilde{D}_{j} p_{j}) w_k, \quad k = 0, \dots, N-1, \label{eq:cs_dynamics}\\
    &\Pr(\alpha_{x, i_x}^\top x_k \leq \beta_{x, i_x}) \geq 1 - \delta_{x, i_x}, \quad i_x = 1, \dots, N_x, \quad k = 0, \dots, N-1, \label{eq:cs_state_chance_const}\\
    &\Pr(\alpha_{u, i_u}^\top u_k \leq \beta_{u, i_u}) \geq 1 - \delta_{u, i_u}, \quad i_u = 1, \dots, N_u, \quad k = 0, \dots, N-1, \label{eq:cs_control_chance_const}\\
    &\Expectation[x_N] = \mu_F, \label{eq:cs_terminal_mean_const}\\
    &\Expectation[(x_N - \Expectation[x_N])(x_N - \Expectation[x_N])^\top] = \Sigma_F. \label{eq:cs_terminal_cov_const}
\end{align}
\end{subequations}

\section{Covariance Steering Controller Design} \label{sec:covariance_steering_controller_design}


\subsection{Moment Formulation}

We introduce the control policy $u_k = L_{k} x_k + v_k$. Consequently, notice that the system \eqref{eq:stochastic_sys} can be written as
\begin{align}
    x_{k+1} &= \big(\bar{A} + \sum_{j=1}^{{n_p}} \tilde{A}_{j} p_{j} + (\bar{B} + \sum_{j=1}^{{n_p}} \tilde{B}_{j} p_{j})L_{k} \big) x_k + (\bar{B} + \sum_{j=1}^{{n_p}} \tilde{B}_{j} p_{j}) v_k + (\bar{D} + \sum_{j=1}^{{n_p}} \tilde{D}_{j} p_{j}) w_k,
\end{align}
which may also be written as
\begin{align} \label{eq:sum_of_systems}
    x_{k+1} &= (\bar{A} + \bar{B} L_{k}) x_k + \bar{B} v_k + \bar{D} w_k + \sum_{j=1}^{n_p} \left((\tilde{A}_j + \tilde{B}_j L_{k}) x_k + \tilde{B}_j v_k + \tilde{D}_j w_k \right) p_j.
\end{align}

From \eqref{eq:sum_of_systems}, a set of straightforward calculations shows that the expected state at time $k = 0, \dots, N-1$, may be succinctly described by the difference equation
\begin{align} \label{eq:mean_dif_eq}
    \mu&[x_{k+1-{\ell_j}} p_{j_1} \ldots p_{j_{{\ell_j}}}] \nonumber\\
    &= f(\mu[x_{k-{\ell_j}} p_{j_1} \ldots p_{j_{{\ell_j}}}], \mu[x_{k-\ell_j} p_{j_1} \ldots p_{j_{\ell_j}} p_{j_{\ell_j+1}}], L_{k - \ell_j}, v_{k-\ell_j}, \Expectation[p_{j_1} \ldots p_{j_{\ell_j}}], \Expectation[p_{j_1} \ldots p_{j_{\ell_j}} p_{j_{\ell_j+1}}]) \nonumber\\
    &= (\bar{A} + \bar{B} L_{k-\ell_j}) \mu[x_{k-\ell_j} p_{j_1} \ldots p_{j_{\ell_j}}] + \bar{B} v_{k-\ell_j} \Expectation[p_{j_1} \ldots p_{j_{\ell_j}}] \nonumber\\
    &+ \sum_{j_{\ell_j+1}=1}^{n_p} (\tilde{A}_{j_{\ell_j+1}} + \tilde{B}_{j_{\ell_j+1}} L_{k-{\ell_j}}) \mu[x_{k-\ell_j} p_{j_1} \ldots p_{j_{\ell_j}} p_{j_{\ell_j+1}}] + \tilde{B}_{j_{\ell_j+1}} v_{k-\ell_j} \Expectation[p_{j_1} \ldots p_{j_{\ell_j}} p_{j_{\ell_j+1}}],
\end{align}
where $\Expectation[x_0 p_{j_1} \ldots p_{j_{\ell_j}}] = \mu_0 \Expectation[p_{j_1} \ldots p_{j_{\ell_j}}]$, where ${\ell_j} = 0, \dots, k$, $j = 1, \dots, {n_p}$, and where, for convenience, we write $\mu[x, \dots, y] = \Expectation[x, \dots, y]$. The derivation of \eqref{eq:mean_dif_eq} is given in Appendix~A. Note that \eqref{eq:mean_dif_eq} depends on the moments of increasing order of $p_j$ and the previous state. Therefore, \eqref{eq:mean_dif_eq} can be evaluated using \eqref{eq:initial_mean}.

Likewise, the state covariance can be described by a similar (albeit more extensive) set of difference equations. Letting $\sigma[x] = x - \Expectation[x]$ and $\Sigma[x, y] = \Expectation[\sigma[x] \sigma[y]^\top]$, 
we can compute
\begin{align}\label{eq:g}
    \Sigma&[x_{k+1-{\ell_k}} p_{i_1} \ldots p_{i_{\ell_i}}, x_{k+1-{\ell_k}} p_{j_1} \dots p_{j_{\ell_j}}] \nonumber\\
    &= g\big(\{\Sigma[x_{k-{\ell_k}} p_{i_1} \ldots p_{i_{n_i}},\allowbreak x_{k-{\ell_k}} p_{j_1} \ldots p_{j_{n_j}}],\allowbreak \Sigma[x_{k-{\ell_k}} p_{i_1} \ldots p_{i_{n_i}}, p_{j_1} \ldots p_{j_{n_j}}], \Sigma[p_{i_1} \ldots p_{i_{n_i}}, p_{j_1} \ldots p_{j_{n_j}}], \nonumber\\
    &\quad \Expectation[p_{i_1} \ldots p_{i_{n_i}} p_{j_1} \ldots p_{j_{n_j}}]\}_{n_i = {\ell_i}, n_j = {\ell_j}}^{{\ell_i}+1, {\ell_j}+1}, L_{k-{\ell_k}}, v_{k-{\ell_k}}\big),
\end{align}
where $g(\cdot)$ is given by \eqref{eq:state_cov_dif_eq}, and where
\begin{align}\label{eq:h}
    \Sigma&[x_{k-{\ell_k}} p_{i_1} \ldots p_{i_{\ell_i}}, p_{j_1} \ldots p_{j_{\ell_j}}] = h\big(\{\Sigma[x_{k-{\ell_k}} p_{i_1} \ldots p_{i_{n_i}}, p_{j_1} \ldots p_{j_{\ell_j}}], \Sigma[p_{i_1} \ldots p_{i_{n_i}}, p_{j_1} \ldots p_{j_{\ell_j}}]\}_{n_i={\ell_i}}^{{\ell_i}+1}, v_{k-{\ell_k}}\big),
\end{align}
is given by \eqref{eq:p_cov_dif_eq}, and where $\Sigma[x_0 p_{j_1} \ldots p_{j_{\ell_j}}, x_0 p_{i_1} \ldots p_{i_{\ell_i}}]$ and $\Sigma[x_0 p_{j_1} \ldots p_{j_{\ell_j}}, p_{i_1} \ldots p_{i_{\ell_i}}]$ are given by \eqref{eq:initial_cov} and \eqref{eq:initial_xp_cov}, respectively, and where $k = 0, \dots, N-1$, ${\ell_i} = 0, \dots k$, ${\ell_j} = 0, \dots k$, ${\ell_k} = \max[{\ell_i}, {\ell_j}]$ and $i, j = 0, \dots, {n_p}$. Note that $g(\cdot)$ and $h(\cdot)$ depend on the previous covariances and increasing moments of $p_j$ and therefore can be evaluated using \eqref{eq:initial_cov} and \eqref{eq:initial_xp_cov} as the initialization.

In conclusion, the mean and covariance of the state at time $k$ can be succinctly described in terms of the control policy $\{v_0, \dots, v_{N-1}; L_0, \dots, L_{N-1}\}$ and the initial conditions \eqref{eq:initial_mean}, \eqref{eq:initial_cov}, and \eqref{eq:initial_xp_cov} by the three difference equations \eqref{eq:mean_dif_eq}, \eqref{eq:g}, and \eqref{eq:h}. With the difference equations for the mean and covariance of the state known, we may reformulate Problem \eqref{prob:cs_prob} as the deterministic problem
\begin{subequations}\label{prob:det_prob}
\begin{align}
    &\min_{\gamma, \zeta} \quad J(\mu_0, \Sigma_0; \gamma, \zeta)
     = \sum_{k=0}^{N-1} \mu[x_k]^\top Q \mu[x_k] + (v_k + L_{k}\mu[x_k])^\top R (v_k + L_{k}\mu[x_k]) \nonumber\\
      &\qquad + \trace(\Sigma[x_k, x_k] Q) + \trace(L_{k} \Sigma[x_k, x_k] L_{k}^\top R), \label{eq:det_cost}\\
    &\text{subject to } \nonumber\\
    &\mu[x_0 p_{j_1} \ldots p_{j_{\ell_j}}] = \mu_0 \Expectation[p_{j_1} \ldots p_{j_{\ell_j}}], \label{eq:det_init_mean}\\
    &\Sigma[x_0 p_{j_1} \ldots p_{j_{\ell_j}}, x_0 p_{i_1} \ldots p_{i_{\ell_i}}] \nonumber\\
        &\quad = (\Sigma_0 + \mu_0 \mu_0^\top)\Expectation[p_{j_1} \ldots p_{j_{\ell_j}} p_{i_1} \ldots p_{i_{\ell_i}}] - \mu_0 \mu_0^\top \Expectation[p_{j_1} \ldots p_{j_{\ell_j}}] \Expectation[p_{i_1} \ldots p_{i_{\ell_i}}], \label{eq:det_init_cov}\\
    &\mu[x_{k+1-{\ell_j}} p_{j_1} \ldots p_{j_{{\ell_j}}}] = f(\{\mu[x_{k-{\ell_j}} p_{j_1} \ldots p_{j_{n_j}}], \Expectation[p_{j_1} \ldots p_{j_{n_j}}] \}_{n_j = {\ell_j}}^{{\ell_j}+1}, L_{k-{\ell_j}}, v_{k-{\ell_j}}), \label{eq:det_mean}\\
    &\Sigma[x_{k+1-{\ell_k}} p_{i_1} \ldots p_{i_{\ell_i}}, x_{k+1-{\ell_k}} p_{j_1} \dots p_{j_{\ell_j}}] \nonumber\\
    &\quad= g(\{\Sigma[x_{k-{\ell_k}} p_{i_1} \ldots p_{i_{n_i}}, x_{k-{\ell_k}} p_{j_1} \ldots p_{j_{n_j}}], \Sigma[x_{k-{\ell_k}} p_{i_1} \ldots p_{i_{n_i}, p_{j_1} \ldots p_{j_{n_j}}}],\nonumber\\ 
    &\qquad\quad\Sigma[p_{i_1} \ldots p_{i_{n_i}}, p_{j_1} \ldots p_{j_{n_j}}], \Expectation[p_{i_1} \ldots p_{i_{n_i}} p_{j_1} \ldots p_{j_{n_j}}]\}_{n_i = {\ell_i}, n_j = {\ell_j}}^{{\ell_i}+1, {\ell_j}+1}, L_{k-{\ell_k}}, v_{k-{\ell_k}}), \label{eq:det_x_cov}\\
    &\Sigma[x_{k+1-{\ell_k}} p_{i_1} \ldots p_{i_{\ell_i}}, p_{j_1} \ldots p_{j_{\ell_j}}] \nonumber\\
    &\quad = h(\{\Sigma[x_{k-{\ell_k}} p_{i_1} \ldots p_{i_{n_i}}, p_{j_1} \ldots p_{j_{\ell_j}}], \Sigma[p_{i_1} \ldots p_{i_{n_i}}, p_{j_1} \ldots p_{j_{\ell_j}}]\}_{n_i = {\ell_i}}^{{\ell_i}+1}, L_{k-{\ell_k}}, v_{k-{\ell_k}}) , \label{eq:det_p_cov}\\
    &\alpha_{x,i_x}^\top \mu[x_k] + \sqrt{\alpha_{x,i_x}^\top \Sigma[x_k, x_k] \alpha_{x,i_x}}\sqrt{\frac{1 - \delta_{x, i_x}}{\delta_{x, i_x}}} - \beta_{x,i_x} \leq 0, \label{eq:det_state_chance_const}\\
    &\alpha_{u,i_u}^\top (v_k + L_{k} \mu[x_k]) + \sqrt{\alpha_{u,i_u}^\top L_{k} \Sigma[x_k, x_k] L_{k}^\top \alpha_{u,i_u}}\sqrt{\frac{1 - \delta_{u, i_u}}{\delta_{u, i_u}}} - \beta_{u,i_u} \leq 0, \label{eq:det_input_chance_const}\\
    &\mu[x_N] = \mu_F, \label{eq:det_terminal_mean_const}\\
    &\Sigma[x_N, x_N] = \Sigma_F, \label{eq:det_terminal_cov_const}
\end{align}
\end{subequations}
where $\gamma = \{L_0, \dots, L_{N-1}\}$, $\zeta = \{v_0, \dots, v_{N-1}\}$, $(j_n, i_n) = 1, \dots, {n_p}$, $({\ell_j}, {\ell_i}) = 0, \dots, N-1$ and where we have applied Cantelli's inequality \cite{marshall1960multivariate} to the chance constraints.

\subsection{Solution Methodology}

Problem \eqref{prob:det_prob} is nonconvex owing to the multiplication between $L_{k}$ and $\Expectation[x_k]$ in \eqref{eq:mean_dif_eq}, and similar bilinearities in \eqref{eq:state_cov_dif_eq} and \eqref{eq:p_cov_dif_eq}. Previous work has overcome these issues by proposing an alternative control policy (e.g., $u_k = L_k (x_k - \Expectation[x_k]) + v_k$ or $u_k = \sum_{t=0}^k K_t w_t + v_k$) in order to ensure that only the additive term $v_k$ appears in the mean dynamics and overcome the bilinearities in the covariance constraint by utilizing the symmetry of the covariance and performing a change of variables to create a semidefinite program \cite{okamoto2018optimal, benedikter2022convex, balci2022covariance}. However, due to the state-dependent nature of the multiplicative disturbances, it is not possible to remove the feedback policy from all realizations of \eqref{eq:mean_dif_eq} because the state mean is not independent of the disturbances. Moreover, the structure of \eqref{eq:state_cov_dif_eq} and \eqref{eq:p_cov_dif_eq} does not admit a straightforward conversion to a semidefinite program. 
Instead, we propose to solve the non-convex problem \eqref{prob:det_prob} using sequential convex programming (SCP). SCP has been shown to have convergence guarentees under mild assumptions and has a solid and growing foundation for its use to solve optimal control problems \cite{dinh2010local, bonalli2022sequential, dyro2021particle}.

To this end, we introduce the linearization function $\lin(x, y, z) = (x - \hat{x}) \hat{y} \hat{z} + \hat{x} (y - \hat{y}) \hat{z} + \hat{x} \hat{y} (z - \hat{z}) + \hat{x} \hat{y} \hat{z}$ which is used to derive the local linear approximations given by
\begin{subequations}\label{eq:linearized_dif_eqs}
\begin{align}\label{eq:f_bar}
    \mu&[x_{k+1-{\ell_j}} p_{j_1} \ldots p_{j_{{\ell_j}}}] \nonumber\\
    &=\bar{f}\big(\{\mu[x_{k-{\ell_j}} p_{j_1} \ldots p_{j_{n_j}}], \Expectation[p_{j_1} \ldots p_{j_{n_j}}] \}_{n_j = {\ell_j}}^{{\ell_j}+1},  L_{k-{\ell_j}}, v_{k-{\ell_j}}, \{\hat{\mu}[x_{k-{\ell_j}} p_{j_1} \ldots p_{j_{n_j}}] \}_{n_j = {\ell_j}}^{{\ell_j}+1}, \hat{L}_{k-{\ell_j}}\big), \\
    \label{eq:g_bar}
    \Sigma&[x_{k+1-{\ell_k}} p_{i_{\ell_i}}, x_{k+1-{\ell_k}} p_{j_{\ell_j}}] = \bar{g}\big(\{\Sigma[x_{k-{\ell_k}} p_{i_{n_i}}, x_{k-{\ell_k}} p_{j_{n_j}}], \Sigma[x_{k-{\ell_k}} p_{i_{n_i}, p_{j_{n_j}}}], \Sigma[p_{i_{n_i}}, p_{j_{n_j}}], \nonumber\\
    &\qquad\Expectation[p_{i_1} \ldots p_{i_{n_i}} p_{j_1} \ldots p_{j_{n_j}}]\}_{n_i = {\ell_i}, n_j = {\ell_j}}^{{\ell_i}+1, {\ell_j}+1}, L_{k-{\ell_k}}, v_{k-{\ell_k}},\nonumber\\
    &\qquad\{\hat{\Sigma}[x_{k-{\ell_k}} p_{i_{n_i}}, x_{k-{\ell_k}} p_{j_{n_j}}], \hat{\Sigma}[x_{k-{\ell_k}} p_{i_{n_i}, p_{j_{n_j}}}]\}_{n_i = {\ell_i}, n_j = {\ell_j}}^{{\ell_i}+1, {\ell_j}+1}, \hat{L}_{k-{\ell_k}}, \hat{v}_{k-{\ell_k}} \big), \\
    \label{eq:h_bar}
    \Sigma&[x_{k+1-{\ell_k}} p_{i_1} \ldots p_{i_{\ell_i}}, p_{j_1} \ldots p_{j_{\ell_j}}] \nonumber\\
    &= \bar{h}\big(\{\Sigma[x_{k-{\ell_k}} p_{i_1} \ldots p_{i_{n_i}}, p_{j_1} \ldots p_{j_{\ell_j}}], \Sigma[p_{i_1} \ldots p_{i_{n_i}}, p_{j_1} \ldots p_{j_{\ell_j}}]\}_{n_i = {\ell_i}}^{{\ell_i}+1},\nonumber\\
    &\qquad v_{k-{\ell_k}}, L_{k-{\ell_k}}, \{\hat{\Sigma}[x_{k-{\ell_k}} p_{i_1} \ldots p_{i_{n_i}}, p_{j_1} \ldots p_{j_{\ell_j}}]\}_{n_i = {\ell_i}}^{{\ell_i}+1}, \hat{L}_{k-{\ell_k}}\big),
\end{align}
\end{subequations}
from \eqref{eq:mean_dif_eq}, \eqref{eq:g}, and \eqref{eq:h}, respectively, 
where $k = 0, \dots, N-1$, $(i, j) = 1, \dots, {n_p}$, $({\ell_i}, {\ell_j}) = 0, \dots, k$, and ${\ell_k} = \max({\ell_i}, {\ell_j})$, and where $\hat{L}_{k-{\ell_j}}$, $\hat{\mu}[x_{k-{\ell_j}} p_{j_1} \ldots p_{j_{{\ell_j}}}]$, $\hat{v}_{k-{\ell_k}}$, $\hat{\Sigma}[x_{k-{\ell_k}} p_{i_{{\ell_i}}}, p_{j_{\ell_j}}]$, and $\hat{\Sigma}[x_{k-{\ell_k}} p_{i_{{\ell_i}}}, x_{k-{\ell_k}} p_{j_{\ell_j}}]$ are the linearization points about the decision variables $L_{k-{\ell_j}}$, $\mu[x_{k-{\ell_j}} p_{j_1} \ldots p_{j_{{\ell_j}}}]$, ${v}_{k-{\ell_k}}$, $\Sigma[x_{k-{\ell_k}} p_{i_{{\ell_i}}}, p_{j_{\ell_j}}]$, and ${\Sigma}[x_{k-{\ell_k}} p_{i_{{\ell_i}}}, x_{k-{\ell_k}} p_{j_{\ell_j}}]$, respectively, and where, for brevity, we write $\Sigma[x_{k-{\ell_k}} p_{i_{\ell_i}}, x_{k-{\ell_k}} p_{j_{\ell_j}}]\\ = \Sigma[x_{k-{\ell_k}} p_{i_1} \ldots p_{i_{\ell_i}}, x_{k-{\ell_k}} p_{i_1} \ldots p_{j_{\ell_j}}]$, $\Sigma[x_{k-{\ell_k}} p_{i_{{\ell_i}}}, p_{j_{\ell_j}}]\allowbreak = \Sigma[x_{k-{\ell_k}} p_{i_1} \ldots p_{i_{{\ell_i}}}, p_{j_1} \ldots p_{j_{\ell_j}}]$, and $\Sigma[p_{i_{\ell_i}}, p_{j_{\ell_j}}] = \Sigma[p_{i_1} \ldots p_{i_{\ell_i}}, p_{i_1} \ldots p_{j_{\ell_j}}]$. The expressions for \eqref{eq:f_bar}, \eqref{eq:g_bar}, and \eqref{eq:h_bar} are given by \eqref{eq:mean_dif_eq_lin}, \eqref{eq:state_cov_dif_eq_lin}, and \eqref{eq:p_cov_dif_eq_lin}, respectively, in Appendix~C.

The chance constraints are given in a convex form by
\begin{align}
    &\Big(\frac{\sqrt{\lambda_{x, i_x, k}}}{2} + \frac{1}{2\sqrt{\lambda_{x, i_x, k}}} \alpha_{x,i_x}^\top \Sigma[x_k, x_k] \alpha_{x,i_x}\Big)\sqrt{\frac{1 - \delta_{x, i_x}}{\delta_{x, i_x}}} + \alpha_{x,i_x}^\top \mu[x_k] - \beta_{x,i_x} \leq 0, \label{eq:lin_state_chance_const}\\
    &\Big(\frac{\sqrt{\lambda_{u, i_u, k}}}{2} + \frac{1}{2\sqrt{\lambda_{u, i_u, k}}} \alpha_{u,i_u}^\top \lin(L_{k}, \Sigma[x_k, x_k], L_{k}^\top) \alpha_{u,i_u}\Big) \sqrt{\frac{1 - \delta_{u, i_u}}{\delta_{u, i_u}}} \nonumber\\
    &\quad+ \alpha_{u,i_u}^\top (v_k + \lin(L_{k}, \mu[x_k])) - \beta_{u,i_u} \leq 0, \label{eq:lin_input_chance_const}
\end{align}
where $\lambda_{x, i_x, k} = \alpha_{x,i_x}^\top \hat{\Sigma}[x_k, x_k] \alpha_{x,i_x}$ and $\lambda_{u, i_u, k} = \alpha_{u,i_u}^\top \hat{L}_{k} \hat{\Sigma}[x_k, x_k] \hat{L}_{k}^\top \alpha_{u,i_u}$.
The convex form of the cost function is given by
\begin{align}
    J(\mu_0, \Sigma_0; L_0 \ldots L_{N-1}, v_0, \ldots, v_{N-1}) &= \sum_{k=0}^{N-1} \mu[x_k]^\top Q \mu[x_k] + (v_k + \lin(L_{k}, \mu[x_k]))^\top R (v_k + \lin(L_{k}, \mu[x_k])) \nonumber\\
      &\qquad + \trace(\Sigma[x_k, x_k] Q) + \trace(\lin(L_{k}, \Sigma[x_k, x_k], L_{k}^\top) R) + \Delta_C.
\end{align}
where $\Delta_C = \Delta_R \|v_k - \hat{v}_k\|_2 + \Delta_R \|\mathtt{vec}(L_k - \hat{L}_k)\|_2$ for $\Delta_R > 0$ is an additional cost added to penalize large deviations from the linearization points that would render the local convex approximation no longer valid. Note that $\Delta_R$ must be chosen large enough to allow the sequential convex programming algorithm to converge, but if it is chosen too large it will slow the convergence of the algorithm. 

The convex problem formulation is summarized by 
\begin{subequations}\label{prob:convex_prob}
\begin{align}
    \min_{\gamma, \zeta} \quad & J(\mu_0, \Sigma_0, \hat{\chi}, \hat{\xi}, \hat{\gamma}, \hat{\zeta}; \gamma, \zeta) \nonumber\\
     &\quad = \sum_{k=0}^{N-1} \mu[x_k]^\top Q \mu[x_k] + (v_k + \lin(L_{k}, \mu[x_k]))^\top R (v_k + \lin(L_{k}, \mu[x_k])) \nonumber\\
      &\qquad + \trace(\Sigma[x_k, x_k] Q) + \trace(\lin(L_{k}, \Sigma[x_k, x_k], L_{k}^\top) R) + \Delta_C, \label{eq:convex_cost}\\
    \text{subject to }& \nonumber\\
    &\mu[x_0 p_{j_1} \ldots p_{j_{\ell_j}}] = \mu_0 \Expectation[p_{j_1} \ldots p_{j_{\ell_j}}], \label{eq:convex_init_mean}\\
    &\Sigma[x_0 p_{j_1} \ldots p_{j_{\ell_j}}, x_0 p_{i_1} \ldots p_{i_{\ell_i}}] \nonumber\\
        &\quad = (\Sigma_0 + \mu_0 \mu_0^\top)\Expectation[p_{j_1} \ldots p_{j_{\ell_j}} p_{i_1} \ldots p_{i_{\ell_i}}] - \mu_0 \mu_0^\top \Expectation[p_{j_1} \ldots p_{j_{\ell_j}}] \Expectation[p_{i_1} \ldots p_{i_{\ell_i}}], \label{eq:convex_init_cov}\\
    & \mu[x_{k+1-{\ell_j}} p_{j_1} \ldots p_{j_{{\ell_j}}}] =  \bar{f}(\{\mu[x_{k-{\ell_j}} p_{j_1} \ldots p_{j_{n_j}}], \Expectation[p_{j_1} \ldots p_{j_{n_j}}] \}_{n_j = {\ell_j}}^{{\ell_j}+1},  L_{k-{\ell_j}}, v_{k-{\ell_j}},\nonumber\\
    &\qquad \{\hat{\mu}[x_{k-{\ell_j}} p_{j_1} \ldots p_{j_{n_j}}] \}_{n_j = {\ell_j}}^{{\ell_j}+1}, \hat{L}_{k-{\ell_j}}), \label{eq:convex_mean}\\
    & \Sigma[x_{k+1-{\ell_k}} p_{i_{\ell_i}}, x_{k+1-{\ell_k}} p_{j_{\ell_j}}] = \bar{g}\big(\{\Sigma[x_{k-{\ell_k}} p_{i_{n_i}}, x_{k-{\ell_k}} p_{j_{n_j}}], \Sigma[x_{k-{\ell_k}} p_{i_{n_i}}, p_{j_{n_j}}], \Sigma[p_{i_{n_i}}, p_{j_{n_j}}], \nonumber\\
    &\qquad\Expectation[p_{i_1} \ldots p_{i_{n_i}} p_{j_1} \ldots p_{j_{n_j}}]\}_{n_i = {\ell_i}, n_j = {\ell_j}}^{{\ell_i}+1, {\ell_j}+1}, L_{k-{\ell_k}}, v_{k-{\ell_k}},\nonumber\\
    &\qquad\{\hat{\Sigma}[x_{k-{\ell_k}} p_{i_{n_i}}, x_{k-{\ell_k}} p_{j_{n_j}}], \hat{\Sigma}[x_{k-{\ell_k}} p_{i_{n_i}}, p_{j_{n_j}}]\}_{n_i = {\ell_i}, n_j = {\ell_j}}^{{\ell_i}+1, {\ell_j}+1}, \hat{L}_{k-{\ell_k}}, \hat{v}_{k-{\ell_k}} \big), \label{eq:convex_p_cov}\\
    & \Sigma[x_{k+1-{\ell_k}} p_{i_1} \ldots p_{i_{\ell_i}}, p_{j_1} \ldots p_{j_{\ell_j}}] \nonumber\\
    &\quad = \bar{h}\big(\{\Sigma[x_{k-{\ell_k}} p_{i_1} \ldots p_{i_{n_i}}, p_{j_1} \ldots p_{j_{\ell_j}}], \Sigma[p_{i_1} \ldots p_{i_{n_i}}, p_{j_1} \ldots p_{j_{\ell_j}}]\}_{n_i = {\ell_i}}^{{\ell_i}+1},\nonumber\\
    &\qquad v_{k-{\ell_k}}, L_{k-{\ell_k}}, \{\hat{\Sigma}[x_{k-{\ell_k}} p_{i_1} \ldots p_{i_{n_i}}, p_{j_1} \ldots p_{j_{\ell_j}}]\}_{n_i = {\ell_i}}^{{\ell_i}+1}, \hat{L}_{k-{\ell_k}}\big), \label{eq:convex_x_cov}\\
    &\hspace{-12mm}\Big(\frac{\sqrt{\lambda_{x, i_x, k}}}{2} + \frac{1}{2\sqrt{\lambda_{x, i_x, k}}} \alpha_{x,i_x}^\top \Sigma[x_k, x_k] \alpha_{x,i_x}\Big)\sqrt{\frac{1 - \delta_{x, i_x}}{\delta_{x, i_x}}} + \alpha_{x,i_x}^\top \mu[x_k] - \beta_{x,i_x} \leq 0, \label{eq:convex_state_chance_const}\\
    &\hspace{-13mm}\Big(\frac{\sqrt{\lambda_{u, i_u, k}}}{2} + \frac{1}{2\sqrt{\lambda_{u, i_u, k}}} \alpha_{u,i_u}^\top \lin(L_{k}, \Sigma[x_k, x_k], L_{k}^\top) \alpha_{u,i_u}\Big) \sqrt{\frac{1 - \delta_{u, i_u}}{\delta_{u, i_u}}} \nonumber\\
    &\qquad+ \alpha_{u,i_u}^\top (v_k + \lin(L_{k}, \mu[x_k])) - \beta_{u,i_u} \leq 0, \label{eq:convex_input_chance_const}\\
    &\mu[x_N] = \mu_F, \label{eq:convex_terminal_mean_const}\\
    &\Sigma[x_N, x_N] = \Sigma_F, \label{eq:convex_terminal_cov_const}
\end{align}
\end{subequations}
where $\hat{\chi} = \{\hat{\mu}[x_k], \hat{\mu}[x_k p_{j_1}], \dots, \hat{\mu}[x_k p_{j_1} \dots p_{j_{\ell_j}}] \}_{k=0, j=1, \ell_j=0}^{N-1, {n_p}, k}$, $\hat{\xi} = \{\hat{\Sigma}[x_k], \hat{\Sigma}[x_k p_{i_1}, p_{j_1}], \dots,\allowbreak \hat{\Sigma}[x_k p_{i_1} \dots p_{i_{\ell_i}},\\ p_{j_1} \dots p_{j_{\ell_j}}],\allowbreak \hat{\Sigma}[x_k p_{i_1}, x_k p_{j_1}], \dots, \hat{\Sigma}[x_k p_{i_1} \dots p_{i_{\ell_i}}, x_k p_{j_1} \dots p_{j_{\ell_j}}]\}_{k=0, i=1, {\ell_i}=0, j=1, {\ell_j}=0}^{N-1, {n_p}, N-k, {n_p}, N-k}$, $\hat{\gamma} = \{\hat{L}_k\}_{k=0}^{N-1}$,\\ $\hat{\zeta} = \{\hat{v}_k\}_{k=0}^{N-1}$.

The sequential convex programming algorithm used to solve Problem \eqref{prob:det_prob} using Problem \eqref{prob:convex_prob} is given by Algorithm \ref{alg:scp}.
\begin{algorithm}[h]
\caption{Sequential Convex Programming}
\label{alg:scp}
\begin{algorithmic}[1]
\Require Initial moments: $\mu_0, \Sigma_0$
\Require Initial control guesses: $\hat{\gamma}, \hat{\zeta}$
\Require Convergence tolerance: $\epsilon$
\Loop
\State $\hat{\chi} \gets$ Evaluate \eqref{eq:mean_dif_eq} using $\{\mu_0, \hat{\gamma}, \hat{\zeta}\}$
\State $\hat{\xi} \gets$ Evaluate \eqref{eq:g} and \eqref{eq:h} using $\{\Sigma_0, \hat{\gamma}, \hat{\zeta} \}$
\State $\{\gamma, \zeta\} \gets$ Solve Problem \eqref{prob:convex_prob} using $\{\hat{\chi}, \hat{\xi}, \hat{\gamma}, \hat{\zeta} \}$
\If {$\sum_{k=0}^{N-1} \|v_k - \hat{v}_k\|_2 + \|\mathtt{vec}(L_k - \hat{L}_k)\|_2 < \epsilon$}
\State \Return $\{\gamma, \zeta\}$
\EndIf
\State $\hat{\gamma} \gets {\gamma}$, $\hat{\zeta} \gets {\zeta}$
\EndLoop
\end{algorithmic}
\end{algorithm}
Let $\gamma^\ast, \zeta^\ast$ be the solution returned by Algorithm \ref{alg:scp} with corresponding moments $\chi^\ast, \xi^\ast$, where $\chi^\ast$ is the result of evaluating \eqref{eq:mean_dif_eq} using $\mu_0$, $\gamma^\ast$, and $\zeta^\ast$, and where $\xi^\ast$ is the result of evaluating \eqref{eq:g} and \eqref{eq:h} using $\Sigma_0$, $\gamma^\ast$, and $\zeta^\ast$. Note that $\gamma^\ast, \zeta^\ast$ is therefore a solution of Problem \eqref{prob:convex_prob} for a particular linearization point, which we denote by ${\hat{\gamma}}^\ast, {\hat{\zeta}}^\ast$. We introduce the following theorem regarding the validity of this solution, which is found using the convex local approximate problem \eqref{prob:convex_prob}, in relation to the original covariance steering problem \eqref{prob:cs_prob}.
\begin{theorem}
    If ${\hat{\gamma}}^\ast, {\hat{\zeta}}^\ast$ is a stationary point of Problem \eqref{prob:convex_prob} (that is, $\gamma^\ast = {\hat{\gamma}}^\ast$ and ${\zeta}^\ast = {\hat{\zeta}}^\ast$) then $\gamma^\ast, \zeta^\ast$ is a stationary point of Problem \eqref{prob:det_prob}, and, furthermore, $\gamma^\ast, \zeta^\ast$ is a feasible solution of Problem \eqref{prob:cs_prob}.
\end{theorem}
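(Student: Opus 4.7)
The strategy leverages the first-order exactness of the linearization map: for any trilinear term $xyz$, one has $\lin(\hat{x}, \hat{y}, \hat{z}) = \hat{x}\hat{y}\hat{z}$ and $\nabla \lin\,|_{(\hat{x}, \hat{y}, \hat{z})} = \nabla(xyz)\,|_{(\hat{x}, \hat{y}, \hat{z})}$, and the analogous identity holds for the tangent approximation $\tfrac{\sqrt{\lambda}}{2} + \tfrac{1}{2\sqrt{\lambda}}\,t$ of $\sqrt{t}$ at $t = \lambda$. Using these two facts, I would (i) verify that $(\gamma^*, \zeta^*)$ satisfies every constraint of Problem \eqref{prob:det_prob}, hence of Problem \eqref{prob:cs_prob} via Cantelli's inequality, and (ii) verify that the KKT system of the convex surrogate at the fixed point coincides term-by-term with the KKT system of the nonconvex problem.

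\textbf{Feasibility.}
Under the hypothesis $\hat{\gamma}^* = \gamma^*$ and $\hat{\zeta}^* = \zeta^*$, every instance of $\lin$ in the constraints of Problem \eqref{prob:convex_prob} is evaluated at its own linearization point, so $\bar{f} = f$, $\bar{g} = g$, $\bar{h} = h$ term by term, recovering the moment dynamics \eqref{eq:det_mean}--\eqref{eq:det_p_cov} of Problem \eqref{prob:det_prob}. Similarly, the chance-constraint terms in \eqref{eq:convex_state_chance_const}--\eqref{eq:convex_input_chance_const} satisfy $\tfrac{\sqrt{\lambda}}{2} + \tfrac{1}{2\sqrt{\lambda}}\,\lambda = \sqrt{\lambda}$ when $\Sigma = \hat{\Sigma}$, so they collapse to the Cantelli forms \eqref{eq:det_state_chance_const}--\eqref{eq:det_input_chance_const}. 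The initial-moment and terminal-moment constraints are identical across the two problems by inspection. Hence $(\gamma^*, \zeta^*)$ is feasible for \eqref{prob:det_prob}, and Cantelli's inequality \cite{marshall1960multivariate} then guarantees that the original probabilistic constraints \eqref{eq:cs_state_chance_const}--\eqref{eq:cs_control_chance_const} are satisfied as well.

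\textbf{Stationarity.}
Because Problem \eqref{prob:convex_prob} is convex (all constraints are affine in the decision variables and the cost is convex quadratic plus the convex penalty $\Delta_C$), any minimizer is a KKT point: there exist multipliers for which the subdifferential of the Lagrangian with respect to each decision variable contains zero. Each gradient in this system is either (a) a gradient of a linearized dynamics or chance-constraint term, which by the first-order exactness noted above equals the gradient of the corresponding nonlinear term of \eqref{prob:det_prob} at $(\gamma^*, \zeta^*)$, or (b) the subgradient of $\Delta_C$. At the fixed point the norms $\|v_k - \hat{v}_k\|_2$ and $\|\mathtt{vec}(L_k - \hat{L}_k)\|_2$ vanish, so $\partial\Delta_C$ is the closed ball of radius $\Delta_R$ about the origin and in particular contains $0$. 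Selecting this zero subgradient, the KKT conditions of \eqref{prob:convex_prob} reduce term-by-term, with the same multipliers, to the KKT conditions of \eqref{prob:det_prob} evaluated at $(\gamma^*, \zeta^*)$, certifying that $(\gamma^*, \zeta^*)$ is a stationary point of the nonconvex problem.

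\textbf{Main obstacle.}
The principal bookkeeping challenge is verifying that every nonconvex term in the high-order moment hierarchy appearing in $f$, $g$, and $h$ is a product of at most three decision-variable factors, so that the single $\lin$ identity applies uniformly across \eqref{eq:convex_mean}--\eqref{eq:convex_x_cov}; the explicit expansions of Appendix~C are needed for this check, particularly across the index range $(\ell_i, \ell_j, \ell_k)$. A subtler point is the correct handling of the non-smooth trust-region penalty $\Delta_C$: one must argue via the subdifferential, rather than a classical gradient, that $0 \in \partial \Delta_C$ at the fixed point so that $\Delta_C$ does not spuriously enlarge or distort the stationarity condition. Both issues are routine once isolated, and the overall proof reduces to applying the first-order exactness of $\lin$ systematically.
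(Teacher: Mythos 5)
Your feasibility argument coincides with the paper's and is correct: at a fixed point every occurrence of $\lin$ is evaluated at its own linearization point, so $\bar f$, $\bar g$, $\bar h$ and the tangent approximations of $\sqrt{\cdot}$ in the chance constraints reduce exactly to \eqref{eq:det_mean}--\eqref{eq:det_input_chance_const}, and the conservativeness of Cantelli's inequality then yields feasibility for Problem \eqref{prob:cs_prob}. For stationarity, the paper goes no further than this collapse of the constraint set; you attempt the stronger (and more honest) step of matching the KKT systems via the first-order exactness of $\lin$, which is the right notion of what ``stationary point of Problem \eqref{prob:det_prob}'' ought to mean.

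That stronger step, however, contains a genuine gap in the treatment of the penalty $\Delta_C$. The KKT condition of Problem \eqref{prob:convex_prob} at the fixed point asserts only that there \emph{exists} some $g \in \partial\Delta_C$ for which the smooth part of the Lagrangian gradient satisfies $\nabla_{\mathrm{sm}}\mathcal{L} + g = 0$; it does not permit you to \emph{choose} $g = 0$. Since $\partial\Delta_C$ at the fixed point is (blockwise) the closed ball of radius $\Delta_R$, all that follows is $\|\nabla_{\mathrm{sm}}\mathcal{L}\| \leq \Delta_R$ in each block, i.e., the fixed point is only a $\Delta_R$-approximate stationary point of Problem \eqref{prob:det_prob}. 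Closing this gap requires something extra --- driving $\Delta_R \to 0$ across iterations, an exact-penalty argument, or restricting attention to fixed points at which the witnessing subgradient happens to vanish. Note that the paper's own proof avoids this issue only by never engaging with the cost function: its argument for ``stationarity'' is really an argument that the constraints of the two problems coincide at the fixed point. Your proposal therefore exposes a weakness in the theorem as proved rather than introducing an error absent from the original; but as written, the sentence ``selecting this zero subgradient'' reverses the quantifier and the deduction does not go through.
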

\begin{proof}
    Note that when $\gamma^\ast = {\hat{\gamma}}^\ast$ and $\zeta^\ast = {\hat{\zeta}}^\ast$, equations \eqref{eq:convex_mean}-\eqref{eq:convex_input_chance_const} collapse to \eqref{eq:det_mean}-\eqref{eq:det_input_chance_const}. Therefore, if ${\hat{\gamma}}^\ast, {\hat{\zeta}}^\ast$ is a stationary point of Problem \eqref{prob:convex_prob}, it is also a stationary point of Problem \eqref{prob:det_prob}.
    For the second statement, Problem \eqref{prob:det_prob} is equivalent to Problem \eqref{prob:cs_prob} except for the chance constraints \eqref{eq:det_state_chance_const}-\eqref{eq:det_input_chance_const}, which are conservative approximations of \eqref{eq:cs_state_chance_const}-\eqref{eq:cs_control_chance_const} due to the use of Cantelli's inequality in \eqref{eq:det_state_chance_const}-\eqref{eq:det_input_chance_const} and \eqref{eq:convex_state_chance_const}-\eqref{eq:convex_input_chance_const}. Therefore, a solution satisfying \eqref{eq:det_init_mean}-\eqref{eq:det_terminal_cov_const} is guaranteed to also satisfy \eqref{eq:cs_init_mean}-\eqref{eq:cs_terminal_cov_const}.
\end{proof}
Therefore, if Algorithm \ref{alg:scp} converges, it yields a feasible solution to the original covariance steering problem \eqref{prob:cs_prob}. For guarantees on the rate of convergence of SCP algorithms, see, for example, \cite{dinh2010local} in which it is shown that SCP converges linearly under mild assumptions, in particular, given an initial guess for $\hat{\gamma}, \hat{\zeta}$ which is sufficiently close to a stationary point.
\color{black}

\section{Numerical Results} \label{sec:numerical_results}

\subsection{Spacecraft Control Example}

The proposed approach is verified on a spacecraft control task, and the results are compared to a naive solution of a stochastic problem formulation which assumes that the noise realizations of $p_k$ are i.i.d and a robust problem formulation that assumes the realizations of $p_k$ belong to an ellipsoidal set. 
The spacecraft is considered to move in a plane, and is shown in Fig.~\ref{fig:system_schematics}(\subref{fig:holonomic_spacecraft_schematic}). The equations of motion are given by
\begin{subequations}
    \begin{align}
        \dot{X} &= \nu_x \cos(\Psi) - \nu_y \sin(\Psi), \\
        \dot{Y} &= \nu_x \sin(\Psi) + \nu_y \cos(\Psi), \\
        \dot{\nu}_x &= \tau_x / m, \\
        \dot{\nu}_y &= \tau_y / m,
    \end{align}
\end{subequations}
where $X$ and $Y$ are the position of the spacecraft in an inertial Cartesian frame, $\nu_x$ and $\nu_y$ are the longitudinal and lateral velocity of the spacecraft in the spacecraft body frame, $\tau_x$ and $\tau_y$ are the longitudinal and lateral forces, respectively, applied by the spacecraft's thrusters in the body frame, $m$ is the mass of the spacecraft, and $\Psi$ is the heading angle of the spacecraft body with respect to the inertial $X$ axis. Additionally, we assume the forces given by $\tau_x = \bar{\tau}_x + \tilde{\tau}_x$ and $\tau_y = \bar{\tau}_y + \tilde{\tau}_y$ and consist of controlled components $\bar{\tau}_x, \bar{\tau}_y$ and uncontrolled stochastic components $\tilde{\tau}_x, \tilde{\tau}_y$, representing actuation error. Assuming the spacecraft is stabilized around $\Psi \approx 0$, $\Psi$ becomes an uncertain parameter of the system, and a small angle approximation may be used to write the system as 
\begin{align}
    \begin{bmatrix}
        \nu_{x_{k+1}} \\ \nu_{y_{k+1}} \\ X_{k+1} \\ Y_{k+1}
    \end{bmatrix}
    &= 
    \begin{bmatrix}
        1 & 0 & 0 & 0 \\
        0 & 1 & 0 & 0 \\
        \Delta t & 0 & 1 & 0 \\
        0 & \Delta t & 0 & 1
    \end{bmatrix} 
    \begin{bmatrix}
        \nu_{x_{k}} \\ \nu_{y_{k}} \\ X_{k} \\ Y_{k}
    \end{bmatrix}
     + 
    \begin{bmatrix}
        \Delta t / m & 0 \\
        0 & \Delta t / m \\
        0 & 0 \\
        0 & 0
    \end{bmatrix}
    \begin{bmatrix}
        \bar{\tau}_x \\ \bar{\tau}_y
    \end{bmatrix}
    \nonumber\\
    &+
    \begin{bmatrix}
        0 & 0 & 0 & 0 \\
        0 & 0 & 0 & 0 \\
        0 & -\theta_x \Delta t & 0 & 0 \\
        \theta_x \Delta t & 0 & 0 & 0 
    \end{bmatrix}
    \begin{bmatrix}
        \nu_{x_{k}} \\ \nu_{y_{k}} \\ X_{k} \\ Y_{k}
    \end{bmatrix}
    \Psi 
    + 
    \begin{bmatrix}
        \theta_w \Delta t / m & 0 \\
        0 & \theta_w \Delta t / m \\
        0 & 0 \\
        0 & 0
    \end{bmatrix}
    \begin{bmatrix}
        \tilde{\tau}_x \\ \tilde{\tau}_y,
    \end{bmatrix}
\end{align}
where $\Delta t = 0.2 \text{s}$ is the time-step, and $\theta_x, \theta_w \geq 0 \in \Re$ are the noise intensities. The initial condition is given by $\mu_0 = [1.0, -1.0, 1.5, 1.5]^\top$, $\Sigma_0 = 0.001 I_4$, and the terminal constraints are given as $\mu_F = [0, 0, 0, 0]^\top$, $\Sigma_F = \mathtt{diag}(1.2, 1.0, 0.12, 0.12)$. 
The trajectory is planned over $N = 10$ time-steps. As the mean and covariance dynamics are coupled, the terminal covariance equality constraint \eqref{eq:convex_terminal_cov_const} is relaxed to the inequality $\Sigma[x_N, x_N] \preceq \Sigma_F$ to avoid infeasibility.

First, we consider the case where $\theta_x = 0$ and $\theta_w = 1.2$ so that the system is subject only to i.i.d. additive disturbances. In this case, we observe in Fig.~\ref{fig:spacecraft_add}, as expected, that the proposed approach performs comparably to the semidefinite programming approach which relies on an i.i.d. noise assumption \cite{knaup2023computationally}.

\begin{figure}[h]
    \begin{subfigure}{0.48\columnwidth}
        \includegraphics[width=\textwidth]{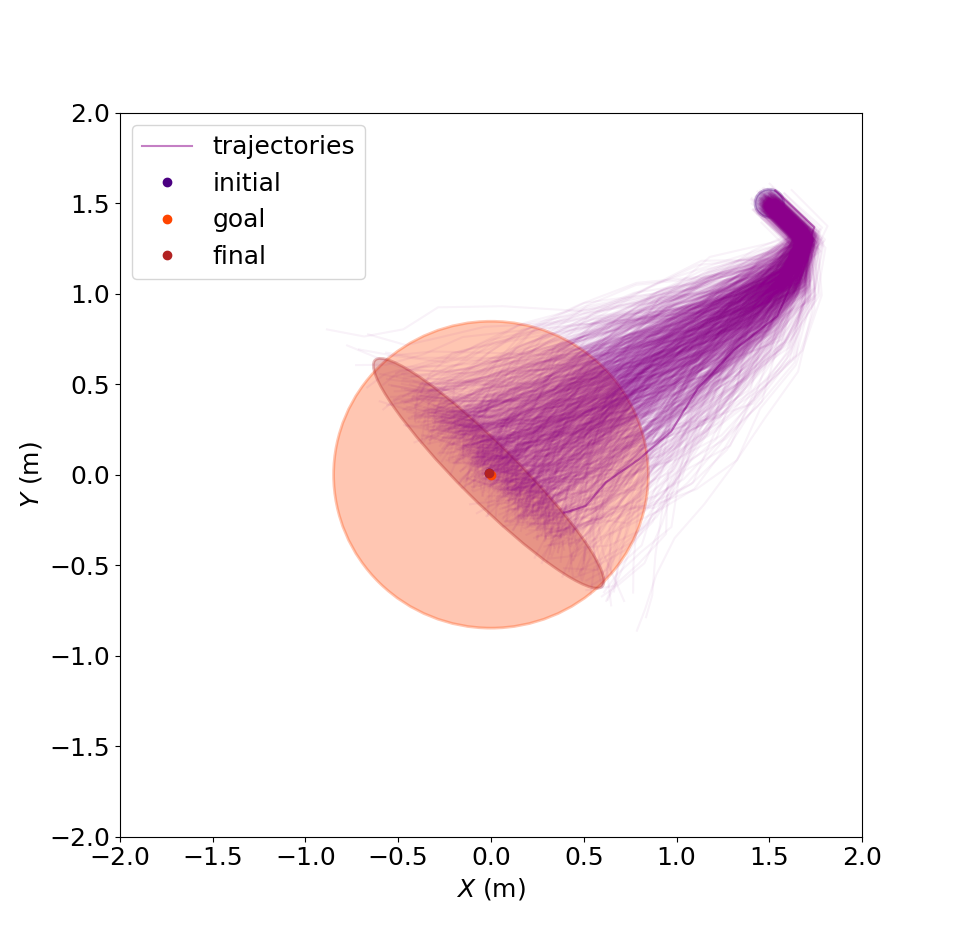}
        \caption{Proposed approach.}
        \label{fig:spacecraft_add_proposed}
    \end{subfigure}
    \begin{subfigure}{0.48\columnwidth}
        \includegraphics[width=\textwidth]{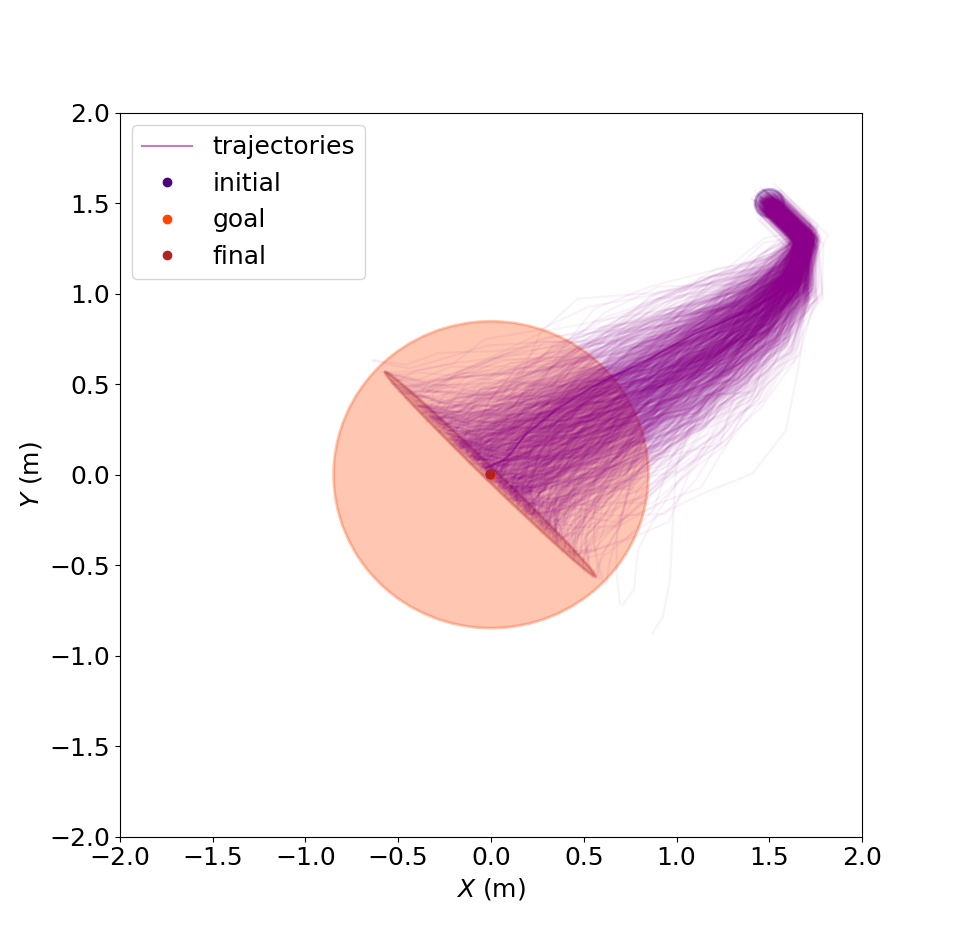}
        \caption{SDP approach.}
        \label{fig:spacecraft_add_sdp}
    \end{subfigure}
    \caption{Covariance steering results for a spacecraft subject to purely additive Gaussian i.i.d. disturbances.}
    \label{fig:spacecraft_add}
\end{figure}

Next, let $\theta_w = 0$ and $\theta_x = 0.3$, so that system is subject only to multiplicative disturbances, and let $\Psi \sim \mathcal{U}(-1, 1)$, so that the disturbances are sampled from a bounded set. In this case, we compare the proposed approach with a robust approach utilizing ellipsoidal sets \cite{houska2016short}, and find that the two perform comparably, as expected, as shown in Fig.~\ref{fig:spacecraft_mult}.

\begin{figure}[h]
    \begin{subfigure}{0.48\columnwidth}
        \includegraphics[width=\textwidth]{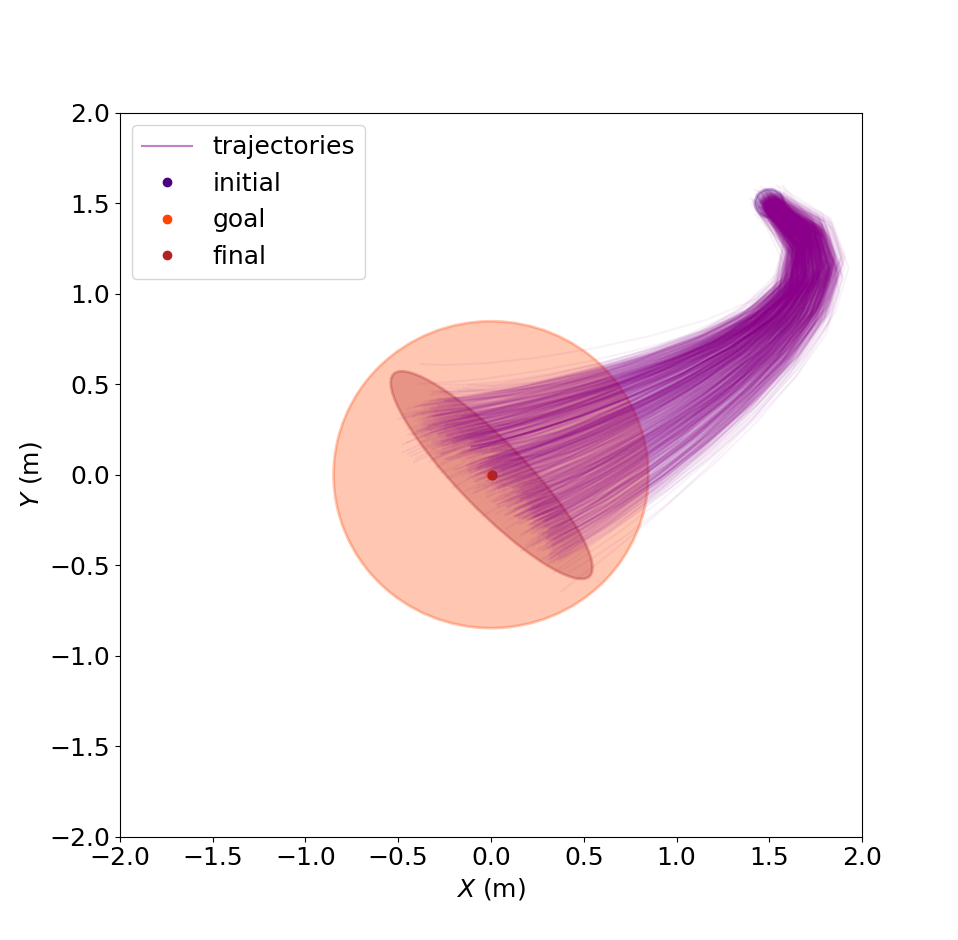}
        \caption{Proposed approach.}
        \label{fig:spacecraft_mult_proposed}
    \end{subfigure}
    \begin{subfigure}{0.48\columnwidth}
        \includegraphics[width=\textwidth]{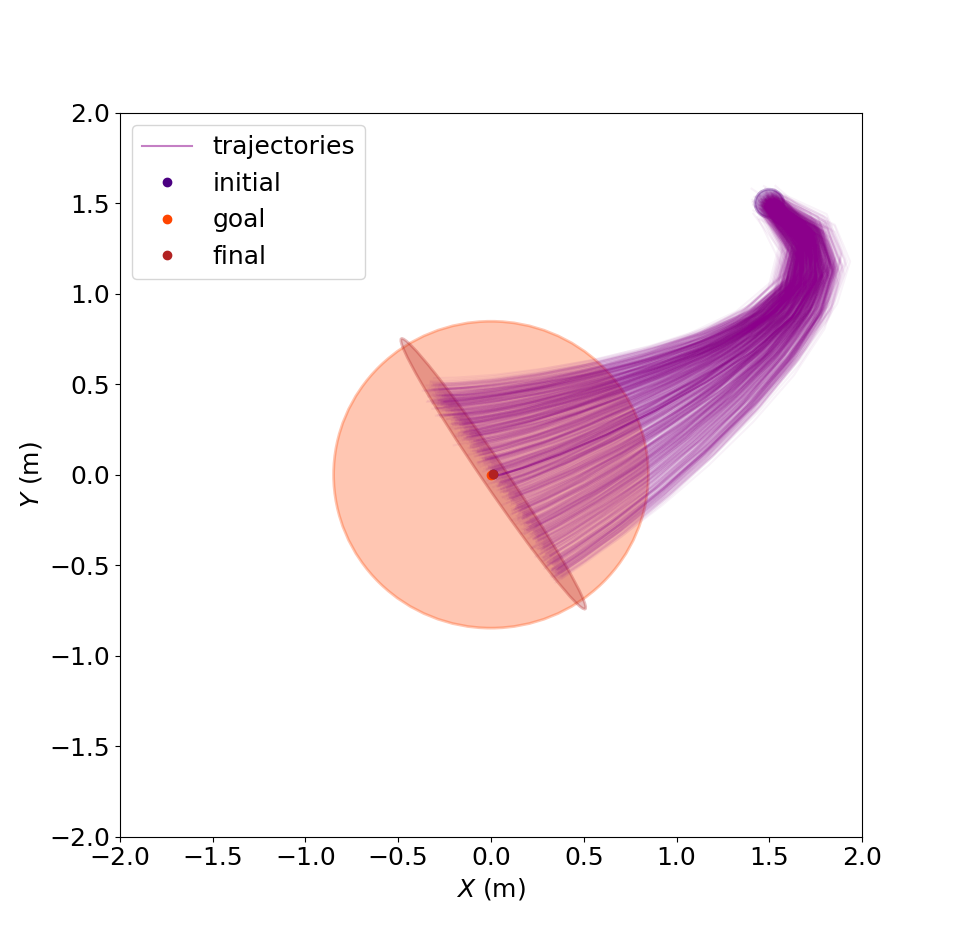}
        \caption{Robust Approach.}
        \label{fig:spacecraft_mult_robust}
    \end{subfigure}
    \caption{Covariance steering results for a spacecraft subject to purely multiplicative uniformly distributed disturbances.}
    \label{fig:spacecraft_mult}
\end{figure}

Finally, let $\theta_x = 0.5, \theta_w = 0.2$, so that the system is subject to both additive and multiplicative uncertainties drawn from distributions with unbounded and bounded support, given by $\tilde{F}_{x_k}, \tilde{F}_{y_k} \sim \mathcal{N}(0, 1)$ and $\Psi \sim \mathcal{U}(-1, 1)$. The results are shown in Fig.~\ref{fig:spacecraft_both}. The proposed approach outperforms both baselines. The SDP-based approach fails to steer to the correct terminal mean and covariance because the SDP-based stochastic approach assumes that the noise is i.i.d., which is violated in the case of multiplicative disturbances $\Psi$. The robust ellipsoid-based approach fails to control the dispersion of the trajectories and does not meet the terminal constraint because the robust approach assumes that the noise is drawn from a bounded set, which is violated by $\tilde{F}_{x_k}, \tilde{F}_{y_k}$.

\begin{figure}[h]
    \begin{subfigure}{0.32\columnwidth}
        \includegraphics[width=\textwidth]{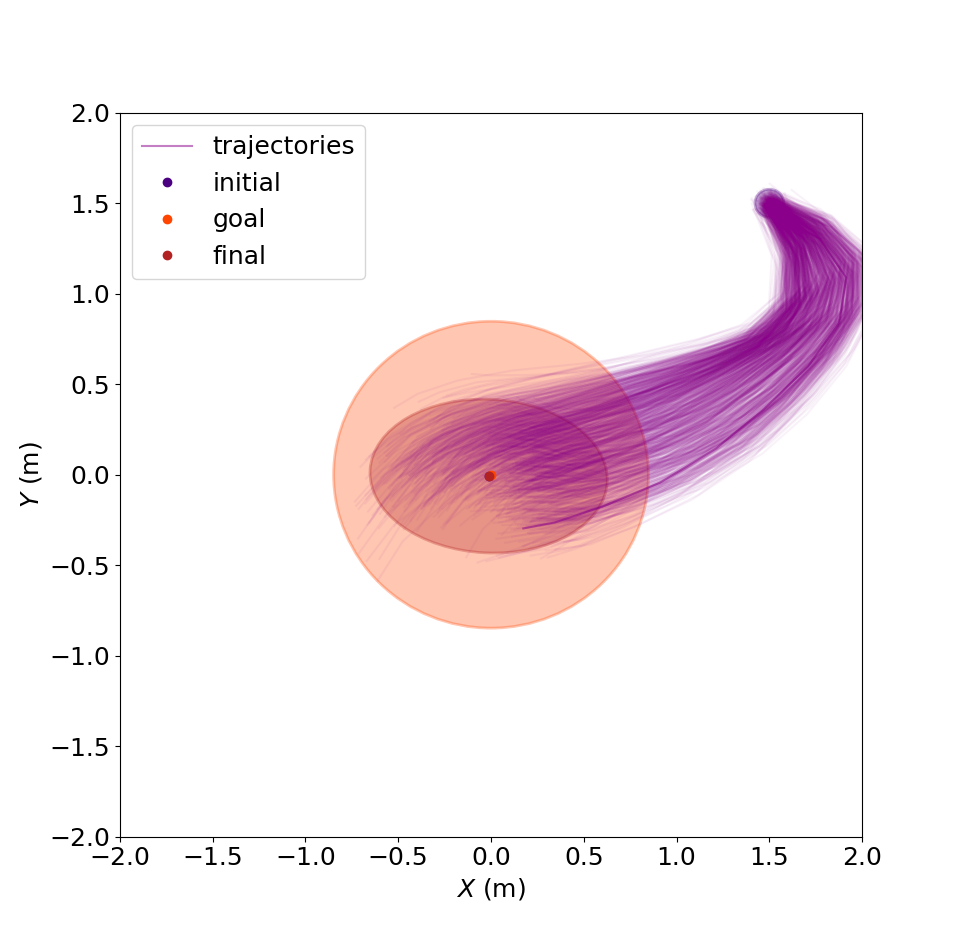}
        \caption{Proposed approach.}
        \label{fig:spacecraft_both_proposed}
    \end{subfigure}
    \begin{subfigure}{0.32\columnwidth}
        \includegraphics[width=\textwidth]{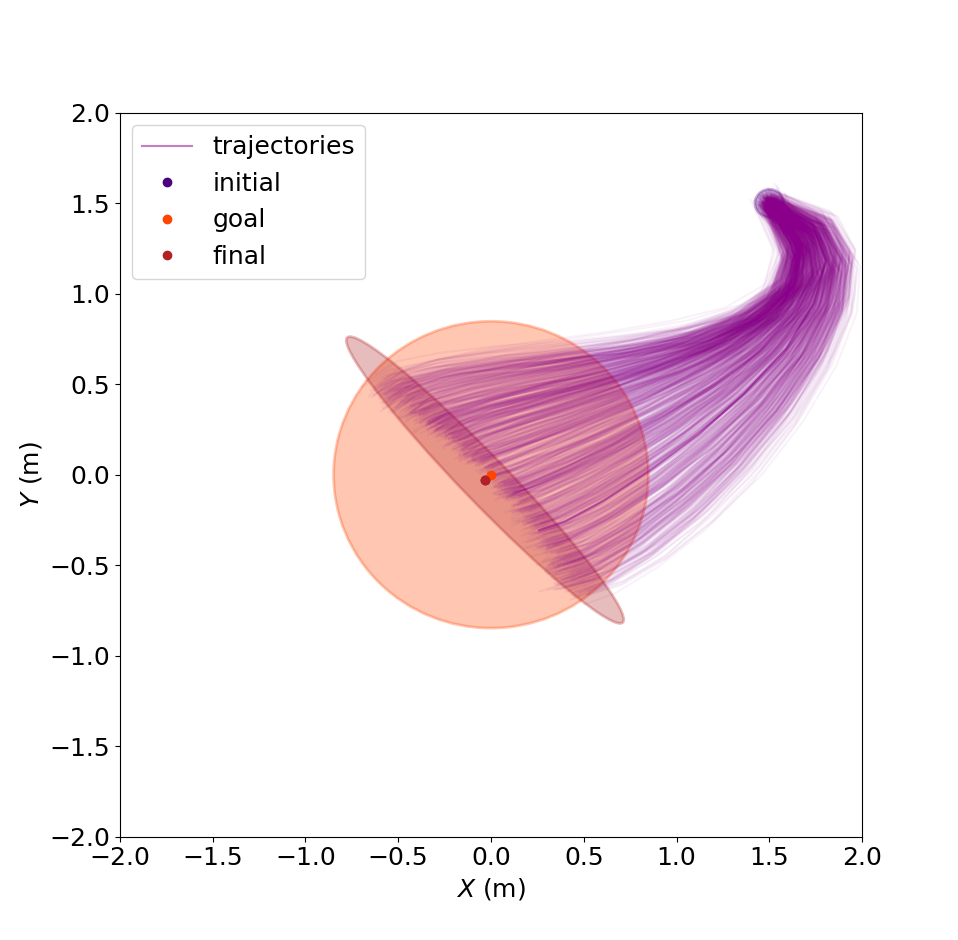}
        \caption{Naive SDP approach.}
        \label{fig:spacecraft_both_sdp}
    \end{subfigure}
    \begin{subfigure}{0.32\columnwidth}
        \includegraphics[width=\textwidth]{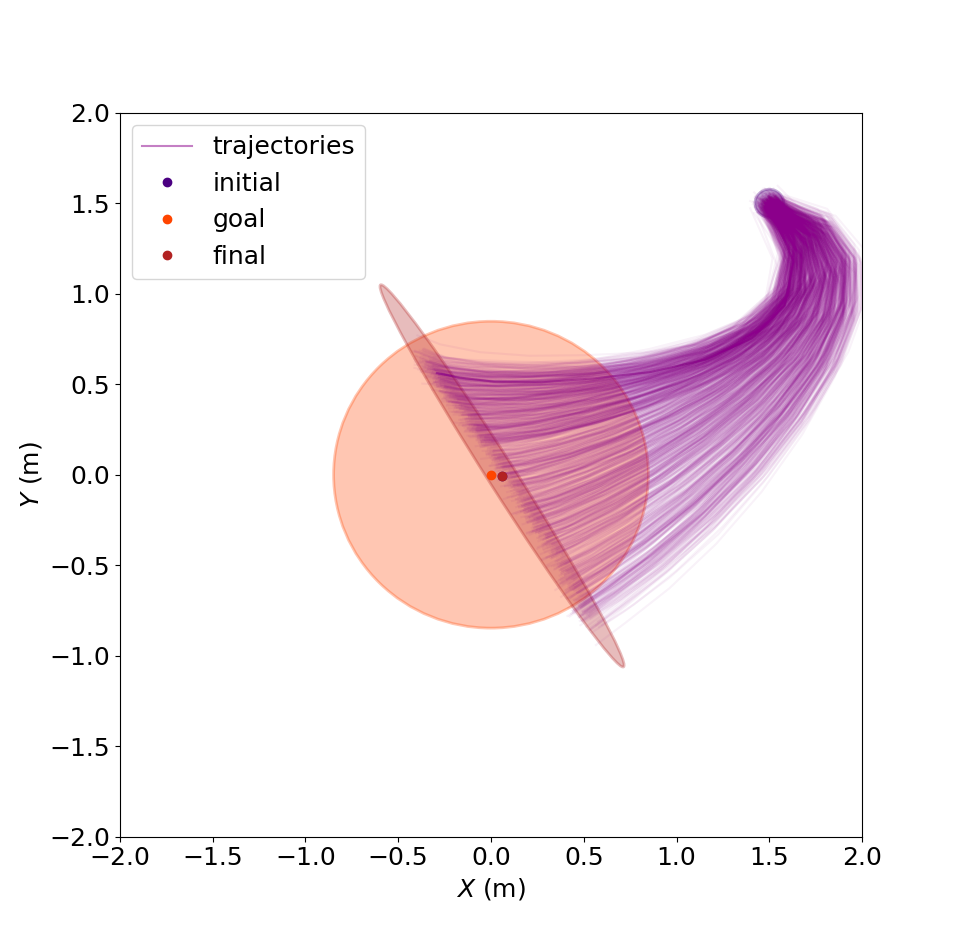}
        \caption{Naive ellipsoidal approach.}
        \label{fig:spacecraft_both_robust}
    \end{subfigure}
    \caption{Covariance steering results for a spacecraft subject to mixed multiplicative uniformly distributed and additive Gaussian i.i.d. disturbances.}
    \label{fig:spacecraft_both}
\end{figure}

\subsection{Vehicle Control Application}

Next, the proposed approach is demonstrated through a vehicle control example. The kinematic bicycle model, shown in Fig.~\ref{fig:system_schematics}(\subref{fig:kinematic_bicycle_schematic}), is commonly used to model the motion of a vehicle with respect to a given reference path. Although the kinematic bicycle model is nonlinear, a linear approximation may be obtained by assuming a constant velocity and assuming the steering angle and the heading error with respect to the reference path are small, which is an approximation technique commonly used in the literature \cite{wang2021path}. 

\begin{figure}
    \centering
    \begin{subfigure}{0.4\columnwidth}
        \includegraphics[width=\textwidth]{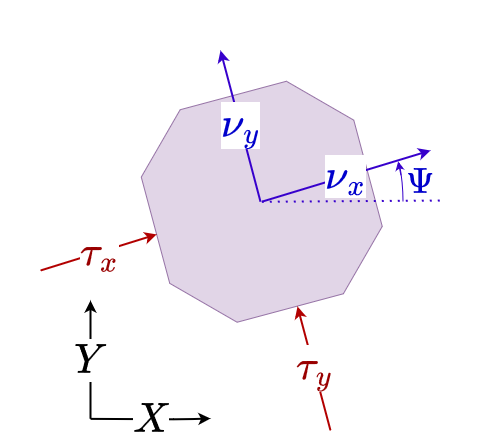}
        \caption{Spacecraft}
        \label{fig:holonomic_spacecraft_schematic}
    \end{subfigure}
    \begin{subfigure}{0.55\columnwidth}
        \includegraphics[width=\textwidth]{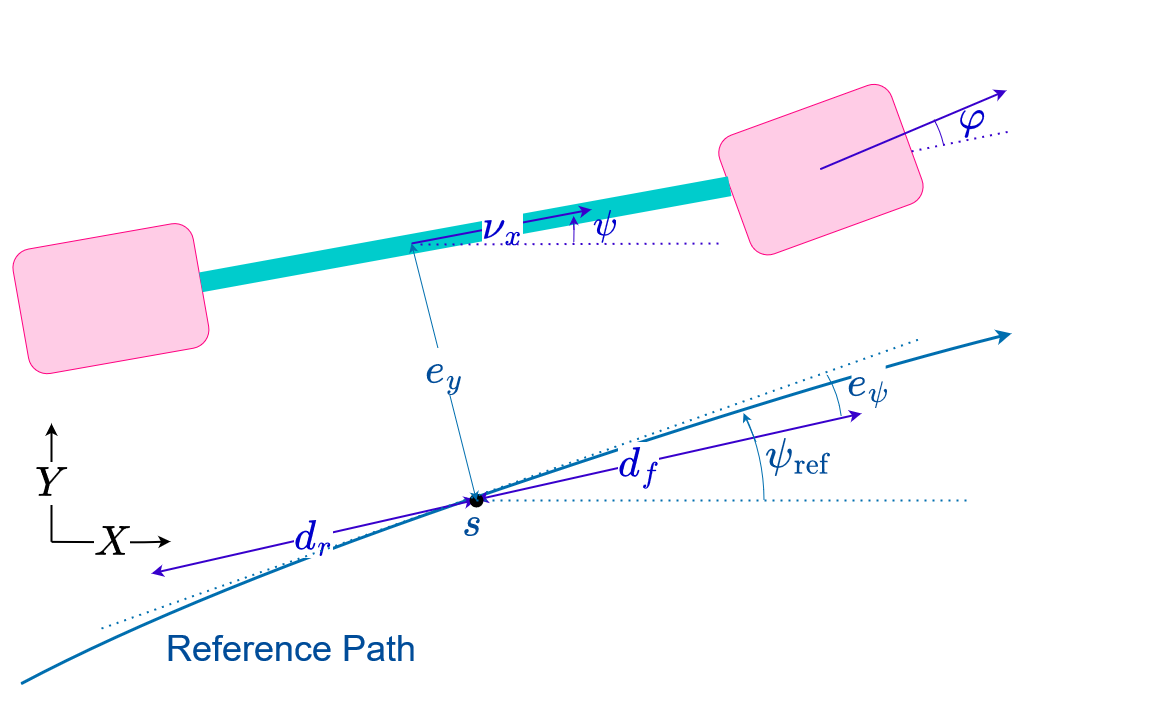}
        \caption{Kinematic Bicycle}
        \label{fig:kinematic_bicycle_schematic}
    \end{subfigure}
    \caption{Schematics for holonomic spacecraft and kinematic bicycle models.}
    \label{fig:system_schematics}
\end{figure}

The linear kinematic bicycle model is given by
\begin{subequations}
    \begin{align}
        \dot{e}_{\psi} &= \frac{\nu_x}{d_f + d_r} \varphi - \dot{\psi}_{\text{ref}} + \frac{d_r}{d_f + d_r} \dot{\varphi}, \\
        \dot{e}_y &= \frac{d_r}{d_f + d_r} \nu_x \varphi + \nu_x e_\psi,
    \end{align}
\end{subequations}
where ${e}_{\psi}$ is the heading error with respect to the reference heading ${\psi}_{\text{ref}}$, $\nu_x$ is the velocity parameter, $d_f$ and $d_r$ are the length from the center of mass to the front and rear wheels respectively, $\varphi$ is the steering angle, and $e_y$ is the lateral error with respect to the reference path. We consider the true velocity $\nu_x$ to be an unknown parameter and write the stochastic system
\begin{align}
    \begin{bmatrix}
        \varphi_{k+1}\\ e_{\psi_{k+1}} \\ e_{y_{k+1}}
    \end{bmatrix} &=
    \begin{bmatrix}
        1 & 0 & 0\\
        \frac{\bar{\nu}_x}{d_f + d_r} \Delta t & 1 & 0 \\
        \frac{d_r}{d_f + d_r} \bar{\nu}_x \Delta t & \bar{\nu}_x \Delta t & 1
    \end{bmatrix}
    \begin{bmatrix}
        \varphi_k \\ e_{\psi_k} \\ e_{y_k}
    \end{bmatrix}
    + \begin{bmatrix}
        \Delta t \\ \frac{d_r}{d_f + d_r} \Delta t \\ 0 
    \end{bmatrix}
    \dot{\varphi}_k \\
    &+ \begin{bmatrix}
        0 & 0 & 0 \\
        \frac{\Delta t}{d_f + d_r} & 0 & 0 \\
        \frac{d_r}{d_f + d_r} \Delta t & \Delta t & 0
    \end{bmatrix}
    \begin{bmatrix}
        \varphi_k \\ e_{\psi_k} \\ e_{y_k}
    \end{bmatrix}
    \theta_x \tilde{\nu}_x
    +
    \begin{bmatrix}
        0 \\ -\bar{\dot{\psi}}_{\text{ref}} \Delta t \\ 0
    \end{bmatrix}
\end{align}
where $\dot{\varphi}_k$ is the control input and where $\bar{\nu}_x$ and $\bar{\dot{\psi}}_{\text{ref}}$ are the nominal parameters and $\tilde{\nu}_x \sim \mathcal{N}(0, 1)$ 
. We set $\bar{\nu}_x = 15$, $d_f = d_r = 1.5$, $\Delta t = 0.1$, $\theta_x = 1.5$, and $\bar{\dot{\psi}}_{\text{ref}} = 1$. The initial and terminal conditions are given as $\mu_0 = [0, 0, 1]^\top$, $\Sigma_0 = \mathtt{diag}(0.001, 0.001, 0.1)$, $\mu_F = [0.3, 0, 0]^\top$, and $\Sigma_F = \mathtt{diag}(1.0, 0.002, 0.01)$.

The proposed approach is compared with the stochastic SDP-based approach and the robust set-based approaches, with the results shown in Fig.~\ref{fig:vehicle}. Fig.~\ref{fig:vehicle}(\subref{fig:vehicle_road_combined}) shows the trajectories resulting from the three approaches transformed into a Cartesian coordinate frame, and it may be seen all three approaches track the reference trajectory through the curve. However, as seen in Fig.~\ref{fig:vehicle}(\subref{fig:vehicle_states_combined}), the robust approach experiences the greatest spread of the trajectories, while the proposed and SDP approaches do a better job mitigating the dispersion. Finally, as highlighted in Fig.~\ref{fig:vehicle}(\subref{fig:vehicle_terminal_combined}), the SDP and robust approaches fail to meet the terminal constraints, while the proposed approach successfully meets the required terminal mean and covariance.

\begin{figure}[h]
    \centering
    \begin{subfigure}{0.3\columnwidth}
        \includegraphics[width=\textwidth]{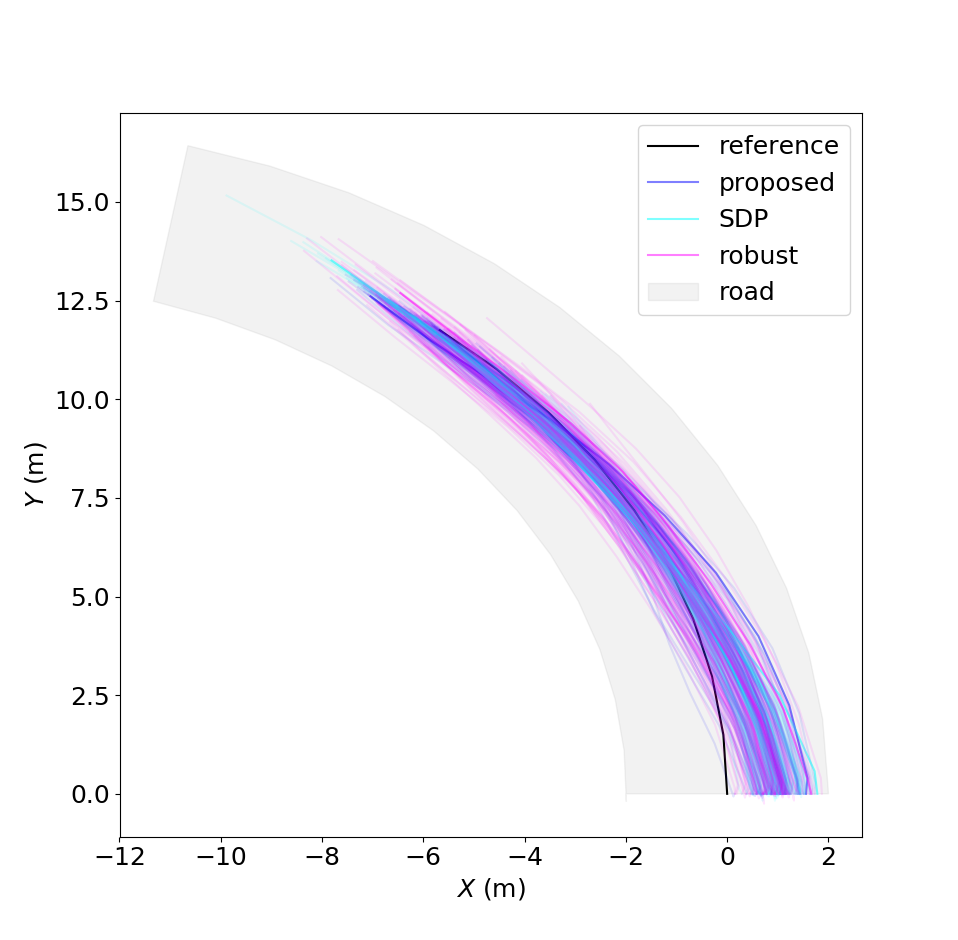}
        \caption{Vehicle trajectories projected to Cartesian coordinate frame.}
        \label{fig:vehicle_road_combined}
    \end{subfigure}
    \begin{subfigure}{0.3\columnwidth}
        \includegraphics[width=\textwidth]{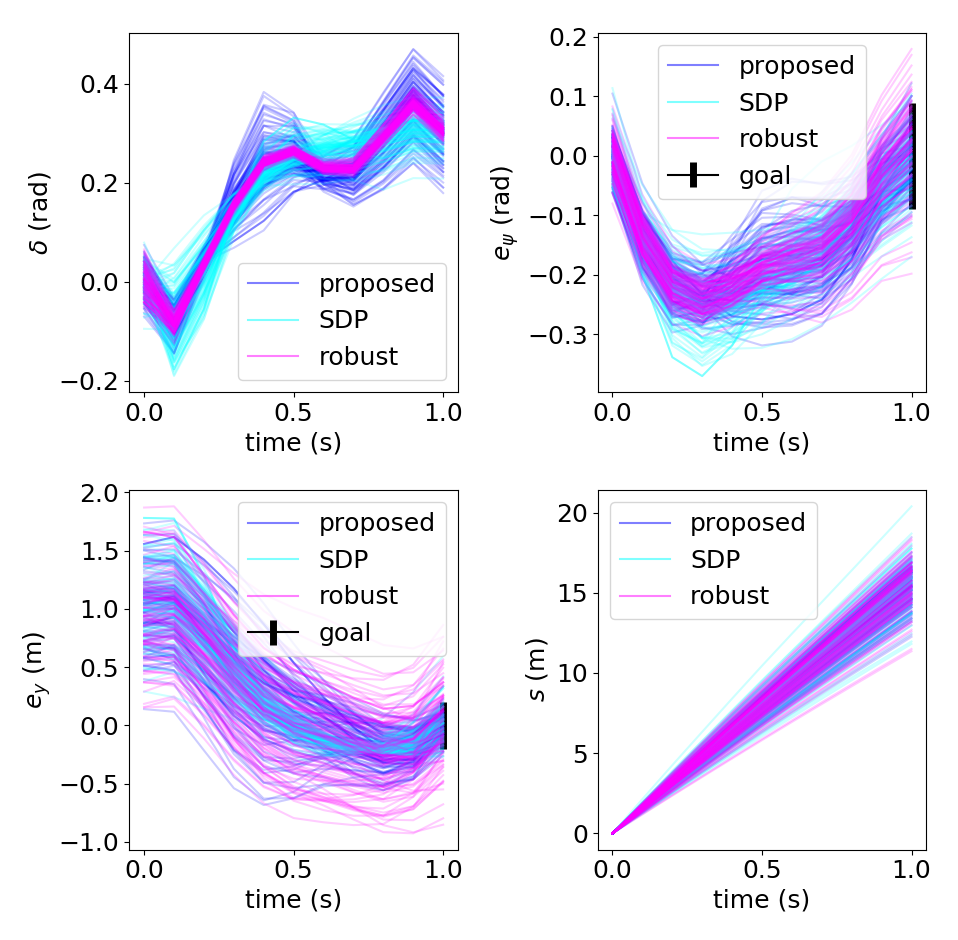}
        \caption{Vehicle states with respect to reference path.}
        \label{fig:vehicle_states_combined}
    \end{subfigure}
    \begin{subfigure}{0.3\columnwidth}
        \includegraphics[width=\textwidth]{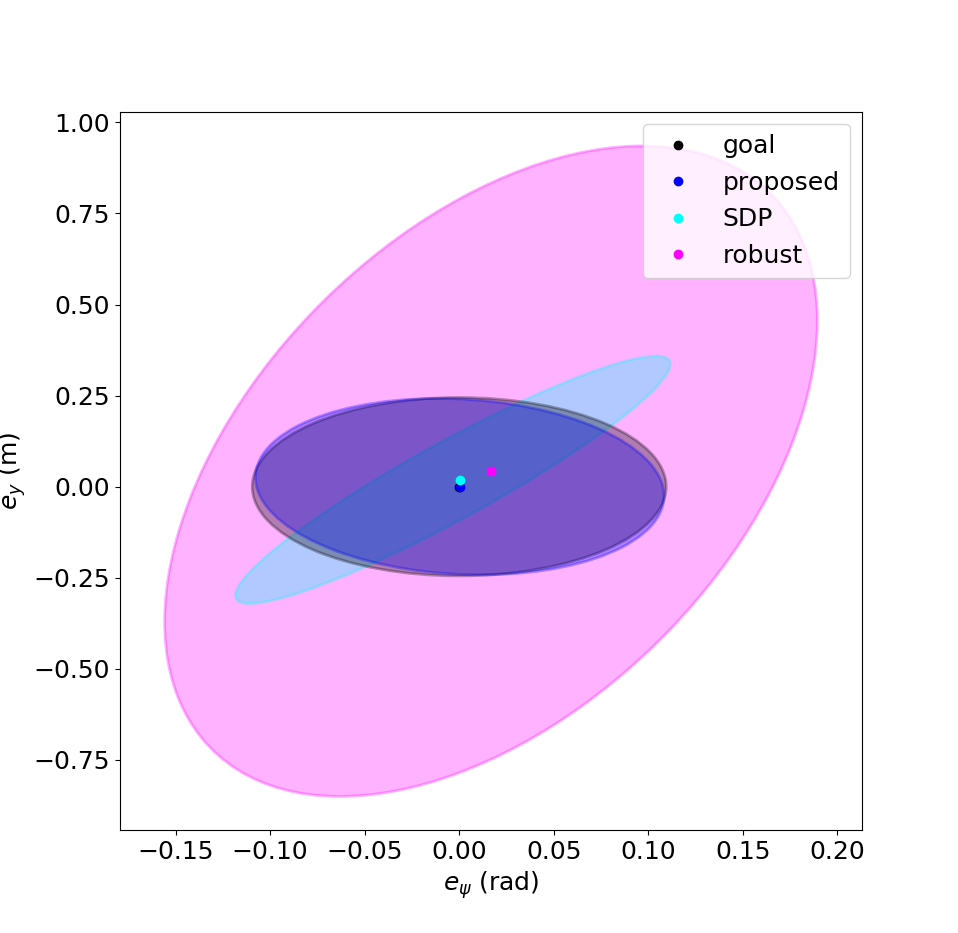}
        \caption{Visualization of terminal constraint, means, and covariance matrices.}
        \label{fig:vehicle_terminal_combined}
    \end{subfigure}
    \caption{Covariance steering results for a path-following vehicle application.}
    \label{fig:vehicle}
\end{figure}

\section{Conclusion}

This work has investigated the optimal covariance steering problem for systems subject to unknown parameters, represented by constant random variables sampled from a distribution with known moments. The proposed covariance steering problem is solved using sequential convex programming, and it was shown that if the sequential convex programming algorithm converges, then a stationary point has been found which solves the nonconvex covariance steering problem. The proposed approach was compared with a stochastic semidefinite programming-based approach, which assumed the multiplicative noise is independent and identically distributed and with a robust set-based approach which assumed the disturbances are drawn from a bounded set. It was shown that the proposed approach performs comparably with these baselines on a holonomic spacecraft system when their respective assumptions hold, and superior performance was demonstrated by the proposed approach when the assumptions were violated as the proposed approach may handle a more general class of disturbances. Finally, the proposed approach is demonstrated on a realistic autonomous vehicle control example using a linearized kinematic bicycle model where the vehicle speed is considered an uncertain, constant parameter. It is shown that the proposed approach (unlike the two baselines) effectively controls the terminal distribution of trajectories.

\bibliographystyle{IEEEtran}
\bibliography{parametric_uncertainty.bib}





\setcounter{equation}{0}

\renewcommand{\thesection}{\Alph{section}}
\renewcommand{\theequation}{\Alph{section}.\arabic{equation}}



\section*{Appendix~A}

\section*{Derivation of Mean Propagation}   \label{appA}
\setcounter{section}{1}

From \eqref{eq:sum_of_systems}, it follows that the expected state is given by the equation
\begin{align}
    \Expectation[x_{k+1}] &= (\bar{A} + \bar{B} L_{k}) \Expectation[x_k] + \bar{B} v_k + \sum_{j=1}^{n_p} (\tilde{A}_j + \tilde{B}_j L_{k}) \Expectation[x_k p_j], \\
\text{where}\hspace{30mm} & \nonumber\\
    \Expectation[x_k p_{j_1}] &= (\bar{A} + \bar{B} L_{k-1}) \Expectation[x_{k-1} p_{j_1}] + \sum_{{j_2}=1}^{n_p} (\tilde{A}_{j_2} + \tilde{B}_{j_2} L_{k-1}) \Expectation[x_{k-1} p_{j_1} p_{j_2}] + \tilde{B}_{j_2} v_{k-1} \Expectation[p_{j_1} p_{j_2}], \\
    \Expectation[x_{k-1} p_{j_1} p_{j_2}] &= (\bar{A} + \bar{B} L_{k-2}) \Expectation[x_{k-2} p_{j_1} p_{j_2}] + \bar{B} v_{k-2} \Expectation[p_{j_1} p_{j_2}] \nonumber\\
    &\quad + \sum_{j_3=1}^{n_p} (\tilde{A}_{j_3} + \tilde{B}_{j_3} L_{k-2}) \Expectation[x_{k-2} p_{j_1} p_{j_2} p_{j_3}] + \tilde{B}_{j_3} v_{k-2} \Expectation[p_{j_1} p_{j_2} p_{j_3}], \\
    & \vdots \nonumber\\
    \Expectation[x_{k-n} p_{j_1} \ldots p_{j_{n+1}}] &= (\bar{A} + \bar{B} L_{k-n-1}) \Expectation[x_{k-n-1} p_{j_1} \ldots p_{j_{n+1}}] + \bar{B} v_{k-n-1} \Expectation[p_{j_1} \ldots p_{j_{n+1}}] \nonumber\\
    &\quad + \sum_{j_{n+2}=1}^{n_p} (\tilde{A}_{j_{n+2}} + \tilde{B}_{j_{n+2}} L_{k-n-1}) \Expectation[x_{k-n-1} p_{j_1} \ldots p_{j_{n+1}} p_{j_{n+2}}] \nonumber\\
    &\quad + \tilde{B}_j v_{k-n-1} \Expectation[p_{j_1} \ldots p_{j_{n+1}} p_{j_{n+2}}],
\end{align}
where $k-n-1 = 0$ and $\Expectation[x_0 p_{j_1} \ldots p_{j_{\ell_j}}] = \mu_0 \Expectation[p_{j_1} \ldots p_{j_{\ell_j}}]$, and where ${\ell_j} = 0, \dots, n+2$, $j = 1, \dots, {n_p}$.

Therefore, the expected state at time $k = 0, \dots, N-1$, may be succinctly described in terms of $f(\mu[x_{k-{\ell_j}} p_{j_1} \ldots p_{j_{{\ell_j}}}],\allowbreak \mu[x_{k-{\ell_j}} p_{j_1} \ldots p_{j_{{\ell_j}}} p_{j_{{\ell_j}+1}}], v_{k-{\ell_j}}, \Expectation[p_{j_1} \ldots p_{j_{{\ell_j}}}], \Expectation[p_{j_1} \ldots p_{j_{{\ell_j}}} p_{j_{{\ell_j}+1}}])$, given by the difference equation
\begin{align} \label{eq:mean_dif_eq_app}
    \mu&[x_{k+1-{\ell_j}} p_{j_1} \ldots p_{j_{{\ell_j}}}] \nonumber\\
    &= f(\mu[x_{k-{\ell_j}} p_{j_1} \ldots p_{j_{{\ell_j}}}], \mu[x_{k-\ell_j} p_{j_1} \ldots p_{j_{\ell_j}} p_{j_{\ell_j+1}}], L_{k - \ell_j}, v_{k-\ell_j}, \Expectation[p_{j_1} \ldots p_{j_{\ell_j}}], \Expectation[p_{j_1} \ldots p_{j_{\ell_j}} p_{j_{\ell_j+1}}]) \nonumber\\
    &= (\bar{A} + \bar{B} L_{k-\ell_j}) \mu[x_{k-\ell_j} p_{j_1} \ldots p_{j_{\ell_j}}] + \bar{B} v_{k-\ell_j} \Expectation[p_{j_1} \ldots p_{j_{\ell_j}}] \nonumber\\
    &+ \sum_{j_{\ell_j+1}=1}^{n_p} (\tilde{A}_{j_{\ell_j+1}} + \tilde{B}_{j_{\ell_j+1}} L_{k-{\ell_j}}) \mu[x_{k-\ell_j} p_{j_1} \ldots p_{j_{\ell_j}} p_{j_{\ell_j+1}}] + \tilde{B}_{j_{\ell_j+1}} v_{k-\ell_j} \Expectation[p_{j_1} \ldots p_{j_{\ell_j}} p_{j_{\ell_j+1}}].
\end{align}


\section*{Appendix~B}

\section*{Derivation of Covariance Propagation} \label{app:B}
\setcounter{section}{2}

Letting $\sigma[x] = x - \Expectation[x]$, we first note that the state error is given by 
\begin{align} \label{eq:error}
    \sigma[x_{k+1}] &= (\bar{A} + \bar{B} L_{k}) x_k + \bar{B} v_k + \bar{D} w_k + \sum_{j=1}^{n_p} (\tilde{A}_j + \tilde{B}_j L_{k}) x_k p_j + \tilde{B}_j v_k p_j + \tilde{D}_j w_k p_j \nonumber\\
    &\quad - (\bar{A} + \bar{B} L_{k}) \Expectation[x_k] - \bar{B} v_k - \bar{D} \Expectation[w_k] - \sum_{j=1}^{n_p} (\tilde{A}_j + \tilde{B}_j L_{k}) \Expectation[x_k p_j] + \tilde{B}_j v_k \Expectation[p_j] + \tilde{D}_j \Expectation[w_k]\Expectation[p_j]) \nonumber\\
    &= (\bar{A} + \bar{B} L_{k}) (x_k - \Expectation[x_k]) + \bar{D} w_k + \sum_{j=1}^{n_p} (\tilde{A}_j + \tilde{B}_j L_{k}) (x_k p_j - \Expectation[x_k p_j]) + \tilde{B}_j v_k p_j + \tilde{D}_j w_k p_j \nonumber\\
    &= (\bar{A} + \bar{B} L_{k}) \sigma[x_k] + \bar{D} w_k + \sum_{j=1}^{n_p} (\tilde{A}_j + \tilde{B}_j L_{k}) \sigma[x_k p_j] + \tilde{B}_j v_k p_j + \tilde{D}_j w_k p_j,
\end{align}
where
\begin{align}
    \sigma[x_{k} p_{j_1}] &= (\bar{A} + \bar{B} L_{k-1}) \sigma[x_{k-1} p_{j_1}] + \bar{B} v_{k-1} p_{j_1} + \bar{D} w_{k-1} p_{j_1} \nonumber\\
    &\quad + \sum_{j_2=1}^{n_p} (\tilde{A}_{j_2} + \tilde{B}_{j_2} L_{k-1}) \sigma[x_{k-1} p_{j_1} p_{j_2}] + \tilde{B}_{j_2} v_{k-1} \sigma[p_{j_1} p_{j_2}] + \tilde{D}_{j_2} w_{k-1} p_{j_1} p_{j_2}, \\
    \sigma[x_{k-1} p_{j_1} p_{j_2}] &= (\bar{A} + \bar{B} L_{k-2}) \sigma[x_{k-2} p_{j_1} p_{j_2}] + \bar{B} v_{k-2} \sigma[p_{j_1} p_{j_2}] + \bar{D} w_{k-2} p_{j_1} p_{j_2} \nonumber\\
    &\quad + \sum_{j_3=1}^{n_p} (\tilde{A}_{j_3} + \tilde{B}_{j_3} L_{k-2}) \sigma[x_{k-2} p_{j_1} p_{j_2} p_{j_3}] + \tilde{B}_{j_3} v_{k-2} \sigma[p_{j_1} p_{j_2} p_{j_3}] + \tilde{D}_{j_3} w_{k-2} p_{j_1} p_{j_2} p_{j_3}, \\
    \vdots \nonumber\\
    \sigma[x_{k-n} p_{j_1} p_{j_2} &\ldots p_{j_{n+1}}] = (\bar{A} + \bar{B} L_{k-n-1}) \sigma[x_{k-n-1} p_{j_1} p_{j_2} \ldots p_{j_{n+1}}] + \bar{B} v_{k-n-1} \sigma[p_{j_1} p_{j_2} \ldots p_{j_{n+1}}] \nonumber\\
     &\quad + \bar{D} w_{k-n-1} p_{j_1} p_{j_2} \ldots p_{j_{n+1}} \nonumber\\
    & + \sum_{j_{n+2}=1}^{n_p} (\tilde{A}_{j_{n+2}} + \tilde{B}_{j_{n+2}} L_{k-n-1}) \sigma[x_{k-n-1} p_{j_1} p_{j_2} \ldots p_{j_{n+2}}] + \tilde{B}_{j_{n+2}} v_{k-n-1} \sigma[p_{j_1} p_{j_2} \ldots p_{j_{n+2}}] \nonumber\\
     &\quad + \tilde{D}_{j_{n+2}} w_{k-n-1} p_{j_1} p_{j_2} \ldots p_{j_{n+2}}, 
\end{align}
Note that, similarly to \eqref{eq:mean_dif_eq}, \eqref{eq:error} may be succinctly described by the set of difference equations
\begin{align}
    \sigma[x_{k+1-\ell_j} p_{j_1} \ldots p_{j_{\ell_j}}] &= (\bar{A} + \bar{B} L_{k-{\ell_j}}) \sigma[x_{k-{\ell_j}} p_{j_1} \ldots p_{j_{\ell_j}}] + \bar{B} v_{k-{\ell_j}} \sigma[p_{j_1} p_{j_2} \ldots p_{j_{\ell_j}}] + \bar{D} w_{k-\ell_j} p_{j_1} p_{j_2} \ldots p_{j_{\ell_j}} \nonumber\\
    &+ \sum_{j_{\ell_j+1}=1}^{n_p} (\tilde{A}_{j_{\ell_j+1}} + \tilde{B}_{j_{\ell_j+1}} L_{k-{\ell_j}}) \sigma[x_{k-\ell_j} p_{j_1} \ldots p_{j_{\ell_j}} p_{j_{{\ell_j}+1}}] + \tilde{B}_{j_{{\ell_j}+1}} v_{k-{\ell_j}} \sigma[p_{j_1} \ldots p_{j_{\ell_j}} p_{j_{{\ell_j}+1}}] \nonumber\\
     &\quad + \tilde{D}_{j_{{\ell_j}+1}} w_{k-{\ell_j}} p_{j_1} \ldots p_{j_{\ell_j}} p_{j_{{\ell_j}+1}},
\end{align}
where $k = 0, \dots, N-1$, ${\ell_j} = 0, \dots, k$, and $j = 1, \dots, {n_p}$.
Next, the state covariance is given by 
\begin{align}
    \Sigma[x_{k+1}] &= \Expectation[\sigma[x_{k+1}] \sigma[x_{k+1}]^\top] \nonumber\\
    &= \Expectation[((\bar{A} + \bar{B} L_{k}) \sigma[x_k] + \bar{D} w_k + \sum_{j=1}^{n_p} (\tilde{A}_j + \tilde{B}_j L_{k}) \sigma[x_k p_j] + \tilde{B}_j v_k p_j + \tilde{D}_j w_k p_j)((\bar{A} + \bar{B} L_{k}) \sigma[x_k] \nonumber\\
    &\qquad + \bar{D} w_k + \sum_{j=1}^{n_p} (\tilde{A}_j + \tilde{B}_j L_{k}) \sigma[x_k p_j] + \tilde{B}_j v_k p_j + \tilde{D}_j w_k p_j)^\top] \nonumber\\
    &= (\bar{A} + \bar{B} L_{k}) \Expectation[\sigma[x_k] \sigma[x_k]^\top] (\bar{A} + \bar{B} L_{k})^\top + (\bar{A} + \bar{B} L_{k}) \Expectation[\sigma[x_k] w_k^\top] \bar{D}^\top \nonumber\\
     &\qquad+ \sum_{j=1}^{n_p} ((\bar{A} + \bar{B} L_{k}) \Expectation[\sigma[x_k] \sigma[x_k p_j]^\top] (\tilde{A}_j + \tilde{B}_j L_{k})^\top + (\bar{A} + \bar{B} L_{k}) \Expectation[\sigma[x_k] p_j] v_k^\top \tilde{B}_j^\top \nonumber\\
     &\qquad\quad + (\bar{A} + \bar{B} L_{k}) \Expectation[\sigma[x_k] p_j w_k^\top] \tilde{D}_j^\top) \nonumber\\
    &\quad + \bar{D} \Expectation[w_k \sigma[x_k]^\top] (\bar{A} + \bar{B} L_{k})^\top + \bar{D} \Expectation[w_k w_k^\top] \bar{D}^\top \nonumber\\
     &\qquad + \sum_{j=1}^{n_p} (\bar{D} \Expectation[w_k \sigma[x_k p_j]^\top] (\tilde{A}_j + \tilde{B}_j L_{k})^\top + \bar{D} \Expectation[w_k p_j] v_k^\top \tilde{B}_j^\top + \bar{D} \Expectation[w_k p_j w_k^\top] \tilde{D}_j^\top) \nonumber\\
    &\quad + \sum_{j_1=1}^{n_p} ((\tilde{A}_{j_1} + \tilde{B}_{j_1} L_{k}) \Expectation[\sigma[x_k p_{j_1}] \sigma[x_k]^\top] (\bar{A} + \bar{B} L_{k})^\top + (\tilde{A}_{j_1} + \tilde{B}_{j_1} L_{k}) \Expectation[\sigma[x_k p_{j_1}] w_k^\top] \bar{D}^\top \nonumber\\
     &\qquad + \sum_{j_2=1}^{n_p} ((\tilde{A}_{j_1} + \tilde{B}_{j_1} L_{k}) \Expectation[\sigma[x_k p_{j_1}] \sigma[x_k p_{j_2}]^\top] (\tilde{A}_{j_2} + \tilde{B}_{j_2} L_{k})^\top + (\tilde{A}_{j_1} + \tilde{B}_{j_1} L_{k}) \Expectation[\sigma[x_k p_{j_1}] p_{j_2}] v_k^\top \tilde{B}_{j_2}^\top \nonumber\\
      &\quad\qquad + (\tilde{A}_{j_1} + \tilde{B}_{j_1} L_{k}) \Expectation[\sigma[x_k p_{j_1}] p_{j_2} w_k^\top] \tilde{D}_{j_2}^\top)) \nonumber\\
    &\quad + \sum_{j_1=1}^{n_p} (\tilde{B}_{j_1} v_k \Expectation[p_{j_1} \sigma[x_k]^\top] (\bar{A} + \bar{B} L_{k})^\top + \tilde{B}_{j_1} v_k \Expectation[p_{j_1} w_k^\top] \bar{D}^\top \nonumber\\
     &\qquad + \sum_{j_2=1}^{n_p} (\tilde{B}_{j_1} v_k \Expectation[p_{j_1} \sigma[x_k p_{j_2}]^\top] (\tilde{A}_{j_2} + \tilde{B}_{j_2} L_{k})^\top + \tilde{B}_{j_1} v_k \Expectation[p_{j_1} p_{j_2}] v_k^\top \tilde{B}_{j_2}^\top + \tilde{B}_{j_1} v_k \Expectation[p_{j_1} p_{j_2} w_k^\top] \tilde{D}_{j_2}^\top)) \nonumber\\
    &\quad + \sum_{j_1=1}^{n_p} (\tilde{D}_{j_1} \Expectation[w_k p_{j_1} \sigma[x_k]^\top] (\bar{A} + \bar{B} L_{k})^\top + \tilde{D}_{j_1} \Expectation[w_k p_{j_1} w_k^\top] \bar{D}^\top \nonumber\\
     &\qquad + \sum_{j_2=1}^{n_p} (\tilde{D}_{j_1} \Expectation[w_k p_{j_1} \sigma[x_k p_{j_2}]^\top] (\tilde{A}_{j_2} + \tilde{B}_{j_2} L_{k})^\top + \tilde{D}_{j_1} \Expectation[w_k p_{j_1} p_{j_2}] v_k^\top \tilde{B}_{j_2}^\top + \tilde{D}_{j_1} \Expectation[w_k p_{j_1} p_{j_2} w_k^\top] \tilde{D}_{j_2}^\top)).
\end{align}
Applying the expectations, and considering that $w_k$ is i.i.d. with zero mean and unit covariance, results in 
\begin{align}\label{eq:base_cov_exp_eval_dif_eq}
    \Sigma[x_{k+1}] &= (\bar{A} + \bar{B} L_{k}) \Sigma[x_{k}] (\bar{A} + \bar{B} L_{k})^\top \nonumber\\
     &\qquad + \sum_{j=1}^{n_p} ((\bar{A} + \bar{B} L_{k}) \Expectation[\sigma[x_k] \sigma[x_k p_j]^\top] (\tilde{A}_j + \tilde{B}_j L_{k})^\top + (\bar{A} + \bar{B} L_{k}) \Expectation[\sigma[x_k] p_j] v_k^\top \tilde{B}_j^\top) \nonumber\\
    &\quad + \bar{D} \bar{D}^\top + \sum_{j=1}^{n_p} (\bar{D} \Expectation[p_j] \tilde{D}_j^\top) \nonumber\\
    &\quad + \sum_{j_1=1}^{n_p} ((\tilde{A}_{j_1} + \tilde{B}_{j_1} L_{k}) \Expectation[\sigma[x_k p_{j_1}] \sigma[x_k]^\top] (\bar{A} + \bar{B} L_{k})^\top \nonumber\\
     &\qquad + \sum_{j_2=1}^{n_p} ((\tilde{A}_{j_1} + \tilde{B}_{j_1} L_{k}) \Expectation[\sigma[x_k p_{j_1}] \sigma[x_k p_{j_2}]^\top] (\tilde{A}_{j_2} + \tilde{B}_{j_2} L_{k})^\top + (\tilde{A}_{j_1} + \tilde{B}_{j_1} L_{k}) \Expectation[\sigma[x_k p_{j_1}] p_{j_2}] v_k^\top \tilde{B}_{j_2}^\top)) \nonumber\\
    &\quad + \sum_{j_1=1}^{n_p} (\tilde{B}_{j_1} v_k \Expectation[p_{j_1} \sigma[x_k]^\top] (\bar{A} + \bar{B} L_{k})^\top \nonumber\\
     &\qquad + \sum_{j_2=1}^{n_p} (\tilde{B}_{j_1} v_k \Expectation[p_{j_1} \sigma[x_k p_{j_2}]^\top] (\tilde{A}_{j_2} + \tilde{B}_{j_2} L_{k})^\top + \tilde{B}_{j_1} v_k \Expectation[p_{j_1} p_{j_2}] v_k^\top \tilde{B}_{j_2}^\top)) \nonumber\\
    &\quad + \sum_{j_1=1}^{n_p} (\tilde{D}_{j_1} \Expectation[p_{j_1}] \bar{D}^\top + \sum_{j_2=1}^{n_p} (\tilde{D}_{j_1} \Expectation[p_{j_1} p_{j_2}] \tilde{D}_{j_2}^\top)).
\end{align}
It can be seen that $\Expectation[\sigma[x_{k+1}] \sigma[x_{k+1}]^\top]$ depends on $\Expectation[\sigma[x_{k}] \sigma[x_{k}]^\top]$ (which is given by \eqref{eq:base_cov_exp_eval_dif_eq}), and also on $\Expectation[\sigma[x_{k}] \sigma[x_{k} p_j]^\top]$, $\Expectation[\sigma[x_{k}] p_j]^\top]$, $\Expectation[\sigma[x_{k} p_{j_1}] \sigma[x_{k} p_{j_2}]^\top]$, and $\Expectation[\sigma[x_{k} p_{j_1}] p_{j_2}]$. $\Expectation[\sigma[x_{k}] \sigma[x_{k} p_j]^\top]$ is given by
\begin{align}
    \Sigma&[x_{k+1} p_{j_1}, x_{k+1}] = \Expectation[\sigma[x_{k+1} p_{j_1}] \sigma[x_{k+1}]^\top] \nonumber\\
    &= \Expectation[((\bar{A} + \bar{B} L_{k}) \sigma[x_k p_{j_1}] + \bar{B} v_{k} p_{j_1} + \bar{D} w_{k} p_{j_1} + \sum_{j_2=1}^{n_p} (\tilde{A}_{j_2} + \tilde{B}_{j_2} L_{k}) \sigma[x_k p_{j_1} p_{j_2}] + \tilde{B}_{j_2} v_k \sigma[p_{j_1} p_{j_2}] \nonumber\\
    &\qquad + \tilde{D}_{j_2} w_k p_{j_1} p_{j_2})((\bar{A} + \bar{B} L_{k}) \sigma[x_k] + \bar{D} w_k + \sum_{j_3=1}^{n_p} (\tilde{A}_{j_3} + \tilde{B}_{j_3} L_{k}) \sigma[x_k p_{j_3}] + \tilde{B}_{j_3} v_k p_{j_3} + \tilde{D}_{j_3} w_k p_{j_3})^\top] \nonumber\\
    &= (\bar{A} + \bar{B} L_{k}) \Expectation[\sigma[x_k p_{j_1}] \sigma[x_k]^\top] (\bar{A} + \bar{B} L_{k})^\top + (\bar{A} + \bar{B} L_{k}) \Expectation[\sigma[x_k] p_{j_1} w_k^\top] \bar{D}^\top \nonumber\\
     &\qquad+ \sum_{j_3=1}^{n_p} ((\bar{A} + \bar{B} L_{k}) \Expectation[\sigma[x_k p_{j_1}] \sigma[x_k p_{j_3}]^\top] (\tilde{A}_{j_3} + \tilde{B}_{j_3} L_{k})^\top + (\bar{A} + \bar{B} L_{k}) \Expectation[\sigma[x_k p_{j_1}] p_{j_3}] v_k^\top \tilde{B}_{j_3}^\top \nonumber\\
      &\quad\qquad + (\bar{A} + \bar{B} L_{k}) \Expectation[\sigma[x_k p_{j_1}] p_{j_3} w_k^\top] \tilde{D}_{j_3}^\top) \nonumber\\
    &\quad + \bar{B} v_k \Expectation[p_{j_1} \sigma[x_k]^\top] (\bar{A} + \bar{B} L_{k})^\top + \bar{B} v_k \Expectation[p_{j_1} w_k^\top] \bar{D}^\top \nonumber\\
     &\qquad + \sum_{j_3=1}^{n_p} (\bar{B} v_k \Expectation[p_{j_1} \sigma[x_k p_{j_3}]^\top] (\tilde{A}_{j_3} + \tilde{B}_{j_3} L_{k})^\top + \bar{B} v_k \Expectation[p_{j_1} p_{j_3}] v_k^\top \tilde{B}_{j_3}^\top + \bar{B} v_k \Expectation[p_{j_1} p_{j_3} w_k^\top] \tilde{D}_{j_3}^\top) \nonumber\\
    &\quad + \bar{D} \Expectation[w_k p_{j_1} \sigma[x_k]^\top] (\bar{A} + \bar{B} L_{k})^\top + \bar{D} \Expectation[w_k p_{j_1} w_k^\top] \bar{D}^\top \nonumber\\
     &\qquad + \sum_{j_3=1}^{n_p} (\bar{D} \Expectation[w_k p_{j_1} \sigma[x_k p_{j_3}]^\top] (\tilde{A}_{j_3} + \tilde{B}_{j_3} L_{k})^\top + \bar{D} \Expectation[w_k p_{j_1} p_{j_3}] v_k^\top \tilde{B}_{j_3}^\top + \bar{D} \Expectation[w_k p_{j_1} p_{j_3} w_k^\top] \tilde{D}_{j_3}^\top) \nonumber\\
    &\quad + \sum_{j_2=1}^{n_p} ((\tilde{A}_{j_2} + \tilde{B}_{j_2} L_{k}) \Expectation[\sigma[x_k p_{j_1} p_{j_2}] \sigma[x_k]^\top] (\bar{A} + \bar{B} L_{k})^\top + (\tilde{A}_{j_2} + \tilde{B}_{j_2} L_{k}) \Expectation[\sigma[x_k p_{j_1} p_{j_2}] w_k^\top] \bar{D}^\top \nonumber\\
     &\qquad + \sum_{j_3=1}^{n_p} ((\tilde{A}_{j_2} + \tilde{B}_{j_2} L_{k}) \Expectation[\sigma[x_k p_{j_1} p_{j_2}] \sigma[x_k p_{j_3}]^\top] (\tilde{A}_{j_3} + \tilde{B}_{j_3} L_{k})^\top + (\tilde{A}_{j_2} + \tilde{B}_{j_2} L_{k}) \Expectation[\sigma[x_k p_{j_1} p_{j_2}] p_{j_3}] v_k^\top \tilde{B}_{j_2}^\top \nonumber\\
      &\quad\qquad + (\tilde{A}_{j_2} + \tilde{B}_{j_2} L_{k}) \Expectation[\sigma[x_k p_{j_1} p_{j_2}] p_{j_3} w_k^\top] \tilde{D}_{j_3}^\top)) \nonumber\\
    &\quad + \sum_{j_2=1}^{n_p} (\tilde{B}_{j_2} v_k \Expectation[\sigma[p_{j_1} p_{j_2}] \sigma[x_k]^\top] (\bar{A} + \bar{B} L_{k})^\top + \tilde{B}_{j_2} v_k \Expectation[\sigma[p_{j_1} p_{j_2}] w_k^\top] \bar{D}^\top \nonumber\\
     &\qquad + \sum_{j_3=1}^{n_p} (\tilde{B}_{j_2} v_k \Expectation[\sigma[p_{j_1} p_{j_2}] \sigma[x_k p_{j_3}]^\top] (\tilde{A}_{j_3} + \tilde{B}_{j_3} L_{k})^\top + \tilde{B}_{j_2} v_k \Expectation[\sigma[p_{j_1} p_{j_2}] p_{j_3}] v_k^\top \tilde{B}_{j_3}^\top \nonumber\\
      &\quad\qquad + \tilde{B}_{j_2} v_k \Expectation[\sigma[p_{j_1} p_{j_2}] p_{j_3} w_k^\top] \tilde{D}_{j_3}^\top)) \nonumber\\
    &\quad + \sum_{j_2=1}^{n_p} (\tilde{D}_{j_2} \Expectation[w_k p_{j_1} p_{j_2} \sigma[x_k]^\top] (\bar{A} + \bar{B} L_{k})^\top + \tilde{D}_{j_2} \Expectation[w_k p_{j_1} p_{j_2} w_k^\top] \bar{D}^\top \nonumber\\
     &\qquad + \sum_{j_3=1}^{n_p} (\tilde{D}_{j_2} \Expectation[w_k p_{j_1} p_{j_2} \sigma[x_k p_{j_3}]^\top] (\tilde{A}_{j_3} + \tilde{B}_{j_3} L_{k})^\top + \tilde{D}_{j_2} \Expectation[w_k p_{j_1} p_{j_2} p_{j_3}] v_k^\top \tilde{B}_{j_3}^\top + \tilde{D}_{j_2} \Expectation[w_k p_{j_1} p_{j_2} p_{j_3} w_k^\top] \tilde{D}_{j_3}^\top)).
\end{align}
Again, applying the expectations and utilizing the properties of $w_k$ results in
\begin{align}
    \Sigma&[x_{k+1} p_{j_1}, x_{k+1}] \nonumber\\
    &= (\bar{A} + \bar{B} L_{k}) \Sigma[x_k p_{j_1}, x_k] (\bar{A} + \bar{B} L_{k})^\top \nonumber\\
     &\qquad+ \sum_{j_3=1}^{n_p} ((\bar{A} + \bar{B} L_{k}) \Expectation[\sigma[x_k p_{j_1}] \sigma[x_k p_{j_3}]^\top] (\tilde{A}_{j_3} + \tilde{B}_{j_3} L_{k})^\top + (\bar{A} + \bar{B} L_{k}) \Expectation[\sigma[x_k p_{j_1}] p_{j_3}] v_k^\top \tilde{B}_{j_3}^\top \nonumber\\
    &\quad + \bar{B} v_k \Expectation[p_{j_1} \sigma[x_k]^\top] (\bar{A} + \bar{B} L_{k})^\top \nonumber\\
     &\qquad + \sum_{j_3=1}^{n_p} (\bar{B} v_k \Expectation[p_{j_1} \sigma[x_k p_{j_3}]^\top] (\tilde{A}_{j_3} + \tilde{B}_{j_3} L_{k})^\top + \bar{B} v_k \Expectation[p_{j_1} p_{j_3}] v_k^\top \tilde{B}_{j_3}^\top) \nonumber\\
    &\quad + \bar{D} \Expectation[p_{j_1}] \bar{D}^\top + \sum_{j_3=1}^{n_p} (\bar{D} \Expectation[p_{j_1} p_{j_3}] \tilde{D}_{j_3}^\top) \nonumber\\
    &\quad + \sum_{j_2=1}^{n_p} ((\tilde{A}_{j_2} + \tilde{B}_{j_2} L_{k}) \Expectation[\sigma[x_k p_{j_1} p_{j_2}] \sigma[x_k]^\top] (\bar{A} + \bar{B} L_{k})^\top \nonumber\\
     &\qquad + \sum_{j_3=1}^{n_p} ((\tilde{A}_{j_2} + \tilde{B}_{j_2} L_{k}) \Expectation[\sigma[x_k p_{j_1} p_{j_2}] \sigma[x_k p_{j_3}]^\top] (\tilde{A}_{j_3} + \tilde{B}_{j_3} L_{k})^\top + (\tilde{A}_{j_2} + \tilde{B}_{j_2} L_{k}) \Expectation[\sigma[x_k p_{j_1} p_{j_2}] p_{j_3}] v_k^\top \tilde{B}_{j_2}^\top)) \nonumber\\
    &\quad + \sum_{j_2=1}^{n_p} (\tilde{B}_{j_2} v_k \Expectation[\sigma[p_{j_1} p_{j_2}] \sigma[x_k]^\top] (\bar{A} + \bar{B} L_{k})^\top \nonumber\\
     &\qquad + \sum_{j_3=1}^{n_p} (\tilde{B}_{j_2} v_k \Expectation[\sigma[p_{j_1} p_{j_2}] \sigma[x_k p_{j_3}]^\top] (\tilde{A}_{j_3} + \tilde{B}_{j_3} L_{k})^\top + \tilde{B}_{j_2} v_k \Expectation[\sigma[p_{j_1} p_{j_2}] p_{j_3}] v_k^\top \tilde{B}_{j_3}^\top)) \nonumber\\
    &\quad + \sum_{j_2=1}^{n_p} (\tilde{D}_{j_2} \Expectation[p_{j_1} p_{j_2}] \bar{D}^\top + \sum_{j_3=1}^{n_p} (\tilde{D}_{j_2} \Expectation[p_{j_1} p_{j_2} p_{j_3}] \tilde{D}_{j_3}^\top)).
\end{align}
Note that $\Sigma[x_{k+1} p_{j_1}, x_{k+1}]$ depends on $\Sigma[x_k p_{j_1}, x_k]$, $\Expectation[\sigma[x_k p_{j_1}] \sigma[x_k p_{j_3}]^\top]$, $\Expectation[\sigma[x_k p_{j_1}] p_{j_3}]$, $\Expectation[p_{j_1} \sigma[x_k]^\top]$,\\ $\Expectation[\sigma[x_k p_{j_1} p_{j_2}] \sigma[x_k]^\top]$, $\Expectation[\sigma[x_k p_{j_1} p_{j_2}] \sigma[x_k p_{j_3}]^\top]$, $\Expectation[\sigma[x_k p_{j_1} p_{j_2}] p_{j_3}]$, $\Expectation[\sigma[p_{j_1} p_{j_2}] \sigma[x_k]^\top]$, and $\Expectation[\sigma[p_{j_1} p_{j_2}] \sigma[x_k p_{j_3}]^\top]$.
Next, the covariance of $x_k p_{j_1}$ and $x_k p_{j_2}$ is computed as
\begin{align}
    \Sigma&[x_{k+1} p_{j_1}, x_{k+1} p_{j_2}] = \Expectation[\sigma[x_{k} p_{j_1}] \sigma[x_{k} p_{j_2}]^\top] \nonumber\\
    &= (\bar{A} + \bar{B} L_{k}) \Sigma[x_k p_{j_1}, x_k p_{j_2}] (\bar{A} + \bar{B} L_{k})^\top + (\bar{A} + \bar{B} L_{k}) \Expectation[\sigma[x_k p_{j_1}] p_{j_2}] v_k^\top \bar{B}^\top \nonumber\\
     &\qquad+ \sum_{j_4=1}^{n_p} ((\bar{A} + \bar{B} L_{k}) \Expectation[\sigma[x_k p_{j_1}] \sigma[x_k p_{j_2} p_{j_4}]^\top] (\tilde{A}_{j_4} + \tilde{B}_{j_4} L_{k})^\top + (\bar{A} + \bar{B} L_{k}) \Expectation[\sigma[x_k p_{j_1}] \sigma[p_{j_2} p_{j_4}]] v_k^\top \tilde{B}_{j_4}^\top \nonumber\\
    &\quad + \bar{B} v_k \Expectation[p_{j_1} \sigma[x_k p_{j_2}]^\top] (\bar{A} + \bar{B} L_{k})^\top + \bar{B} v_k \Expectation[p_{j_1}  p_{j_2}] v_k^\top \bar{B}^\top \nonumber\\
     &\qquad + \sum_{j_4=1}^{n_p} (\bar{B} v_k \Expectation[p_{j_1} \sigma[x_k p_{j_2} p_{j_4}]^\top] (\tilde{A}_{j_4} + \tilde{B}_{j_4} L_{k})^\top + \bar{B} v_k \Expectation[p_{j_1} \sigma[p_{j_2} p_{j_4}]] v_k^\top \tilde{B}_{j_4}^\top) \nonumber\\
    &\quad + \bar{D} \Expectation[p_{j_1} p_{j_2}] \bar{D}^\top + \sum_{j_4=1}^{n_p} (\bar{D} \Expectation[p_{j_1} p_{j_2} p_{j_4}] \tilde{D}_{j_4}^\top) \nonumber\\
    &\quad + \sum_{j_3=1}^{n_p} ((\tilde{A}_{j_3} + \tilde{B}_{j_3} L_{k}) \Expectation[\sigma[x_k p_{j_1} p_{j_3}] \sigma[x_k p_{j_2}]^\top] (\bar{A} + \bar{B} L_{k})^\top + (\tilde{A}_{j_3} + \tilde{B}_{j_3} L_{k}) \Expectation[\sigma[x_k p_{j_1} p_{j_3}] p_{j_2}] v_k^\top \bar{B}^\top  \nonumber\\
     &\qquad + \sum_{j_4=1}^{n_p} ((\tilde{A}_{j_3} + \tilde{B}_{j_3} L_{k}) \Expectation[\sigma[x_k p_{j_1} p_{j_3}] \sigma[x_k p_{j_2} p_{j_4}]^\top] (\tilde{A}_{j_4} + \tilde{B}_{j_4} L_{k})^\top \nonumber\\
     &\qquad\quad + (\tilde{A}_{j_3} + \tilde{B}_{j_3} L_{k}) \Expectation[\sigma[x_k p_{j_1} p_{j_3}] \sigma[p_{j_2} p_{j_4}]] v_k^\top \tilde{B}_{j_4}^\top)) \nonumber\\
    &\quad + \sum_{j_3=1}^{n_p} (\tilde{B}_{j_3} v_k \Expectation[\sigma[p_{j_1} p_{j_3}] \sigma[x_k p_{j_2}]^\top] (\bar{A} + \bar{B} L_{k})^\top + \tilde{B}_{j_3} v_k \Expectation[\sigma[p_{j_1} p_{j_3}] p_{j_2}] v_k^\top \bar{B}^\top  \nonumber\\
     &\qquad + \sum_{j_4=1}^{n_p} (\tilde{B}_{j_3} v_k \Expectation[\sigma[p_{j_1} p_{j_3}] \sigma[x_k p_{j_2} p_{j_4}]^\top] (\tilde{A}_{j_4} + \tilde{B}_{j_4} L_{k})^\top + \tilde{B}_{j_3} v_k \Expectation[\sigma[p_{j_1} p_{j_3}] \sigma[p_{j_2} p_{j_4}]] v_k^\top \tilde{B}_{j_4}^\top)) \nonumber\\
    &\quad + \sum_{j_3=1}^{n_p} (\tilde{D}_{j_3} \Expectation[p_{j_1} p_{j_2} p_{j_3}] \bar{D}^\top + \sum_{j_4=1}^{n_p} (\tilde{D}_{j_3} \Expectation[p_{j_1} p_{j_2} p_{j_3} p_{j_4}] \tilde{D}_{j_4}^\top)).
\end{align}
From the above, we arrive at
\begin{align}
    \Sigma&[x_{k+1-{\ell_k}} p_{i_1} \ldots p_{i_{\ell_i}}, x_{k+1-{\ell_k}} p_{j_1} \dots p_{j_{\ell_j}}] = \Expectation[\sigma[x_{k+1-{\ell_k}} p_{i_1} \ldots p_{i_{\ell_i}}] \sigma[x_{k+1-{\ell_k}} p_{j_1} \ldots p_{j_{\ell_j}}]^\top] \nonumber\\
    &= ((\bar{A} + \bar{B} L_{k-{\ell_k}}) \sigma[x_{k-{\ell_k}} p_{i_1} \ldots p_{i_{\ell_i}}] + \bar{B} v_{k-{\ell_k}} \sigma[p_{i_1} \ldots p_{i_{{\ell_i}}}] + \bar{D} w_{k-{\ell_k}} p_{i_1} \ldots p_{i_{{\ell_i}}} \nonumber\\
    &\quad + \sum_{i_{{\ell_i}+1}=1}^{n_p} (\tilde{A}_{i_{{\ell_i}+1}} + \tilde{B}_{i_{{\ell_i}+1}} L_{k-{\ell_k}}) \sigma[x_{k-{\ell_k}} p_{i_1} \ldots p_{i_{\ell_i}} p_{i_{{\ell_i}+1}}] + \tilde{B}_{i_{{\ell_i}+1}} v_{k-{\ell_k}} \sigma[p_{i_1} \ldots p_{i_{\ell_i}} p_{i_{{\ell_i}+1}}] \nonumber\\
    &\quad + \tilde{D}_{i_{{\ell_i}+1}} w_{k-{\ell_k}} p_{i_1} \ldots p_{i_{\ell_i}} p_{i_{{\ell_i}+1}})((\bar{A} + \bar{B} L_{k-{\ell_k}}) \sigma[x_{k-{\ell_k}} p_{j_1} \ldots p_{j_{\ell_j}}] + \bar{B} v_{k-{\ell_k}} \sigma[p_{j_1} \ldots p_{j_{{\ell_j}}}] + \bar{D} w_{k-{\ell_k}} p_{j_1} \ldots p_{j_{{\ell_j}}} \nonumber\\
    &\quad + \sum_{j_{{\ell_j}+1}=1}^{n_p} (\tilde{A}_{j_{{\ell_j}+1}} + \tilde{B}_{j_{{\ell_j}+1}} L_{k-{\ell_k}}) \sigma[x_{k-{\ell_k}} p_{j_1} \ldots p_{j_{\ell_j}} p_{j_{{\ell_j}+1}}] + \tilde{B}_{j_{{\ell_j}+1}} v_{k-{\ell_k}} \sigma[p_{j_1} \ldots p_{j_{\ell_j}} p_{j_{{\ell_j}+1}}] \nonumber\\
    &\qquad+ \tilde{D}_{j_{{\ell_j}+1}} w_{k-{\ell_k}} p_{{\ell_j}_1} \ldots p_{j_{\ell_j}} p_{j_{{\ell_j}+1}})^\top \nonumber\\
    &= (\bar{A} + \bar{B} L_{k-{\ell_k}}) \Sigma[x_{k-{\ell_k}} p_{i_1} \ldots p_{i_{\ell_i}}, x_{k-{\ell_k}} p_{j_1} \ldots p_{j_{\ell_j}}] (\bar{A} + \bar{B} L_{k-{\ell_k}})^\top \nonumber\\
    &\qquad\quad + (\bar{A} + \bar{B} L_{k-{\ell_k}}) \Expectation[\sigma[x_{k-{\ell_k}} p_{i_1} \ldots p_{i_{\ell_i}}] \sigma[p_{j_1} \ldots p_{j_{\ell_j}}]] v_{k-{\ell_k}}^\top \bar{B}^\top \nonumber\\
     &\qquad+ \sum_{j_{{\ell_j}+1}=1}^{n_p} ((\bar{A} + \bar{B} L_{k-{\ell_k}}) \Expectation[\sigma[x_{k-{\ell_k}} p_{i_1} \ldots p_{i_{\ell_i}}] \sigma[x_{k-{\ell_k}} p_{j_1} \ldots p_{j_{\ell_j}} p_{j_{{\ell_j}+1}}]^\top] (\tilde{A}_{j_{{\ell_j}+1}} + \tilde{B}_{j_{{\ell_j}+1}} L_{k-{\ell_k}})^\top \nonumber\\
      &\qquad\quad + (\bar{A} + \bar{B} L_{k-{\ell_k}}) \Expectation[\sigma[x_{k-{\ell_k}} p_{i_1} \ldots p_{i_{\ell_i}}] \sigma[p_{j_1} \ldots p_{j_{\ell_j}} p_{j_{{\ell_j}+1}}]] v_{k-{\ell_k}}^\top \tilde{B}_{j_{{\ell_j}+1}}^\top \nonumber\\
    &\quad + \bar{B} v_{k-{\ell_k}} \Expectation[\sigma[p_{i_1} \ldots p_{i_{\ell_i}}] \sigma[x_{k-{\ell_k}} p_{j_1} \ldots p_{j_{\ell_j}}]^\top] (\bar{A} + \bar{B} L_{k-{\ell_k}})^\top + \bar{B} v_{k-{\ell_k}} \Expectation[\sigma[p_{i_1} \ldots p_{i_{\ell_i}}] \sigma[p_{j_1} \ldots p_{j_{\ell_j}}]] v_{k-{\ell_k}}^\top \bar{B}^\top \nonumber\\
     &\qquad + \sum_{j_{{\ell_j}+1}=1}^{n_p} (\bar{B} v_{k-{\ell_k}} \Expectation[\sigma[p_{i_1} \ldots p_{i_{\ell_i}}] \sigma[x_{k-{\ell_k}} p_{j_1} \ldots p_{j_{\ell_j}} p_{j_{{\ell_j}+1}}]^\top] (\tilde{A}_{j_{{\ell_j}+1}} + \tilde{B}_{j_{{\ell_j}+1}} L_{k-{\ell_k}})^\top \nonumber\\
      &\qquad\quad + \bar{B} v_{k-{\ell_k}} \Expectation[\sigma[p_{i_1} \ldots p_{i_{\ell_i}}] \sigma[p_{j_1} \ldots p_{j_{\ell_j}} p_{j_{{\ell_j}+1}}]] v_{k-{\ell_k}}^\top \tilde{B}_{j_{{\ell_j}+1}}^\top) \nonumber\\
    &\quad + \bar{D} \Expectation[p_{i_1} \ldots p_{i_{\ell_i}} p_{j_1} \ldots p_{j_{\ell_j}}] \bar{D}^\top + \sum_{j_{{\ell_j}+1}=1}^{n_p} (\bar{D} \Expectation[p_{i_1} \ldots p_{i_{\ell_i}} p_{j_1} \ldots p_{j_{\ell_j}} p_{j_{{\ell_j}+1}}] \tilde{D}_{j_{{\ell_j}+1}}^\top) \nonumber\\
    &\quad + \sum_{i_{{\ell_i}+1}=1}^{n_p} ((\tilde{A}_{i_{{\ell_i}+1}} + \tilde{B}_{i_{{\ell_i}+1}} L_{k-{\ell_k}}) \Expectation[\sigma[x_{k-{\ell_k}} p_{i_1} \ldots p_{i_{{\ell_i}+1}}] \sigma[x_{k-{\ell_k}} p_{j_1} \ldots p_{j_{\ell_j}}]^\top] (\bar{A} + \bar{B} L_{k-{\ell_k}})^\top \nonumber\\
      &\quad\qquad + (\tilde{A}_{i_{{\ell_i}+1}} + \tilde{B}_{i_{{\ell_i}+1}} L_{k-{\ell_k}}) \Expectation[\sigma[x_{k-{\ell_k}} p_{i_1} \ldots p_{i_{{\ell_i}+1}}] \sigma[p_{j_1} \ldots p_{j_{\ell_j}}]] v_{k-{\ell_k}}^\top \bar{B}^\top  \nonumber\\
     &\qquad + \sum_{j_{{\ell_j}+1}=1}^{n_p} ((\tilde{A}_{i_{{\ell_i}+1}} + \tilde{B}_{i_{{\ell_i}+1}} L_{k-{\ell_k}}) \Expectation[\sigma[x_{k-{\ell_k}} p_{i_1} \ldots p_{i_{{\ell_i}+1}}] \sigma[x_{k-{\ell_k}} p_{j_1} \ldots p_{j_{{\ell_j}+1}}]^\top] (\tilde{A}_{j_{{\ell_j}+1}} + \tilde{B}_{j_{{\ell_j}+1}} L_{k-{\ell_k}})^\top \nonumber\\
      &\quad\qquad + (\tilde{A}_{i_{{\ell_i}+1}} + \tilde{B}_{i_{{\ell_i}+1}} L_{k-{\ell_k}}) \Expectation[\sigma[x_{k-{\ell_k}} p_{i_1} \ldots p_{i_{{\ell_i}+1}}] \sigma[p_{j_1} \ldots p_{j_{{\ell_j}+1}}]] v_{k-{\ell_k}}^\top \tilde{B}_{j_{{\ell_j}+1}}^\top)) \nonumber\\
    &\quad + \sum_{i_{{\ell_i}+1}=1}^{n_p} (\tilde{B}_{i_{{\ell_i}+1}} v_{k-{\ell_k}} \Expectation[\sigma[p_{i_1} \ldots p_{i_{{\ell_i}+1}}] \sigma[x_{k-{\ell_k}} p_{j_1} \ldots p_{j_{\ell_j}}]^\top] (\bar{A} + \bar{B} L_{k-{\ell_k}})^\top \nonumber\\
      &\quad\qquad + \tilde{B}_{i_{{\ell_i}+1}} v_{k-{\ell_k}} \Expectation[\sigma[p_{i_1} \ldots p_{i_{{\ell_i}+1}}] \sigma[p_{j_1} \ldots p_{j_{{\ell_j}}}]] v_{k-{\ell_k}}^\top \bar{B}^\top  \nonumber\\
     &\qquad + \sum_{j_{{\ell_j}+1}=1}^{n_p} (\tilde{B}_{i_{{\ell_i}+1}} v_{k-{\ell_k}} \Expectation[\sigma[p_{i_1} \ldots p_{i_{{\ell_i}+1}}] \sigma[x_{k-{\ell_k}} p_{j_1} \ldots p_{j_{{\ell_j}+1}}]^\top] (\tilde{A}_{j_{{\ell_j}+1}} + \tilde{B}_{j_{{\ell_j}+1}} L_{k-{\ell_k}})^\top \nonumber\\
      &\quad\qquad + \tilde{B}_{i_{{\ell_i}+1}} v_{k-{\ell_k}} \Expectation[\sigma[p_{i_1} \ldots p_{i_{{\ell_i}+1}}] \sigma[p_{j_1} \ldots p_{j_{{\ell_j}+1}}]] v_{k-{\ell_k}}^\top \tilde{B}_{j_{{\ell_j}+1}}^\top)) \nonumber\\
    &\quad + \sum_{i_{{\ell_i}+1}=1}^{n_p} (\tilde{D}_{i_{{\ell_i}+1}} \Expectation[p_{i_1} \ldots p_{i_{{\ell_i}+1}} p_{j_1} \ldots p_{j_{\ell_j}}] \bar{D}^\top + \sum_{j_{{\ell_j}+1}=1}^{n_p} (\tilde{D}_{i_{{\ell_i}+1}} \Expectation[p_{i_1} \ldots p_{i_{{\ell_i}+1}} p_{j_1} \ldots p_{j_{{\ell_j}+1}}] \tilde{D}_{j_{{\ell_j}+1}}^\top))
\end{align}
where $k = 0, \dots, N-1$, ${\ell_i}, {\ell_j} = 0, \dots k$, ${\ell_k} = \max[{\ell_i}, {\ell_j}]$ and $i, j = 0, \dots, {n_p}$ and $\Sigma[x, y] = \Expectation[\sigma[x] \sigma[y]^\top]$. 
Equivalently, we can compute $\Sigma[x_{k+1-{\ell_k}} p_{i_1} \ldots p_{i_{\ell_i}}, x_{k+1-{\ell_k}} p_{j_1} \dots p_{j_{\ell_j}}] 
= g(\{\Sigma[x_{k-{\ell_k}} p_{i_1} \ldots p_{i_{n_i}}, x_{k-{\ell_k}} p_{j_1} \ldots p_{j_{n_j}}],\\ \Sigma[x_{k-{\ell_k}} p_{i_1} \ldots p_{i_{n_i}, p_{j_1} \ldots p_{j_{n_j}}}],\allowbreak \Sigma[p_{i_1} \ldots p_{i_{n_i}}, p_{j_1} \ldots p_{j_{n_j}}],\allowbreak \Expectation[p_{i_1} \ldots p_{i_{n_i}} p_{j_1} \ldots p_{j_{n_j}}]\}_{n_i = {\ell_i}, n_j = {\ell_j}}^{{\ell_i}+1, {\ell_j}+1},\allowbreak L_{k-{\ell_k}}, v_{k-{\ell_k}})$ given by
\begin{align} \label{eq:state_cov_dif_eq}
    \Sigma&[x_{k+1-{\ell_k}} p_{i_1} \ldots p_{i_{\ell_i}}, x_{k+1-{\ell_k}} p_{j_1} \dots p_{j_{\ell_j}}] \nonumber\\
    &= g(\{\Sigma[x_{k-{\ell_k}} p_{i_1} \ldots p_{i_{n_i}},\allowbreak x_{k-{\ell_k}} p_{j_1} \ldots p_{j_{n_j}}],\allowbreak \Sigma[x_{k-{\ell_k}} p_{i_1} \ldots p_{i_{n_i}, p_{j_1} \ldots p_{j_{n_j}}}], \nonumber\\
    &\qquad~~\Sigma[p_{i_1} \ldots p_{i_{n_i}}, p_{j_1} \ldots p_{j_{n_j}}], \Expectation[p_{i_1} \ldots p_{i_{n_i}} p_{j_1} \ldots p_{j_{n_j}}]\}_{n_i = {\ell_i}, n_j = {\ell_j}}^{{\ell_i}+1, {\ell_j}+1}, L_{k-{\ell_k}}, v_{k-{\ell_k}}) \nonumber\\
    &= (\bar{A} + \bar{B} L_{k-{\ell_k}}) \Sigma[x_{k-{\ell_k}} p_{i_1} \ldots p_{i_{\ell_i}}, x_{k-{\ell_k}} p_{j_1} \ldots p_{j_{\ell_j}}] (\bar{A} + \bar{B} L_{k-{\ell_k}})^\top \nonumber\\
    &\qquad\quad + (\bar{A} + \bar{B} L_{k-{\ell_k}}) \Sigma[x_{k-{\ell_k}} p_{i_1} \ldots p_{i_{\ell_i}}, p_{j_1} \ldots p_{j_{\ell_j}}] v_{k-{\ell_k}}^\top \bar{B}^\top \nonumber\\
     &\qquad+ \sum_{j_{{\ell_j}+1}=1}^{n_p} ((\bar{A} + \bar{B} L_{k-{\ell_k}}) \Sigma[x_{k-{\ell_k}} p_{i_1} \ldots p_{i_{\ell_i}}, x_{k-{\ell_k}} p_{j_1} \ldots p_{j_{\ell_j}} p_{j_{{\ell_j}+1}}] (\tilde{A}_{j_{{\ell_j}+1}} + \tilde{B}_{j_{{\ell_j}+1}} L_{k-{\ell_k}})^\top \nonumber\\
      &\qquad\quad + (\bar{A} + \bar{B} L_{k-{\ell_k}}) \Sigma[x_{k-{\ell_k}} p_{i_1} \ldots p_{i_{\ell_i}}, p_{j_1} \ldots p_{j_{\ell_j}} p_{j_{{\ell_j}+1}}] v_{k-{\ell_k}}^\top \tilde{B}_{j_{{\ell_j}+1}}^\top \nonumber\\
    &\quad + \bar{B} v_{k-{\ell_k}} \Sigma[p_{i_1} \ldots p_{i_{\ell_i}}, x_{k-{\ell_k}} p_{j_1} \ldots p_{j_{\ell_j}}] (\bar{A} + \bar{B} L_{k-{\ell_k}})^\top + \bar{B} v_{k-{\ell_k}} \Sigma[p_{i_1} \ldots p_{i_{\ell_i}}, p_{j_1} \ldots p_{j_{\ell_j}}] v_{k-{\ell_k}}^\top \bar{B}^\top \nonumber\\
     &\qquad + \sum_{j_{{\ell_j}+1}=1}^{n_p} (\bar{B} v_{k-{\ell_k}} \Sigma[p_{i_1} \ldots p_{i_{\ell_i}}, x_{k-{\ell_k}} p_{j_1} \ldots p_{j_{\ell_j}} p_{j_{{\ell_j}+1}}] (\tilde{A}_{j_{{\ell_j}+1}} + \tilde{B}_{j_{{\ell_j}+1}} L_{k-{\ell_k}})^\top \nonumber\\
      &\qquad\quad + \bar{B} v_{k-{\ell_k}} \Sigma[p_{i_1} \ldots p_{i_{\ell_i}}, p_{j_1} \ldots p_{j_{\ell_j}} p_{j_{{\ell_j}+1}}] v_{k-{\ell_k}}^\top \tilde{B}_{j_{{\ell_j}+1}}^\top) \nonumber\\
    &\quad + \bar{D} \Expectation[p_{i_1} \ldots p_{i_{\ell_i}} p_{j_1} \ldots p_{j_{\ell_j}}] \bar{D}^\top + \sum_{j_{{\ell_j}+1}=1}^{n_p} (\bar{D} \Expectation[p_{i_1} \ldots p_{i_{\ell_i}} p_{j_1} \ldots p_{j_{\ell_j}} p_{j_{{\ell_j}+1}}] \tilde{D}_{j_{{\ell_j}+1}}^\top) \nonumber\\
    &\quad + \sum_{i_{{\ell_i}+1}=1}^{n_p} ((\tilde{A}_{i_{{\ell_i}+1}} + \tilde{B}_{i_{{\ell_i}+1}} L_{k-{\ell_k}}) \Sigma[x_{k-{\ell_k}} p_{i_1} \ldots p_{i_{{\ell_i}+1}}, x_{k-{\ell_k}} p_{j_1} \ldots p_{j_{\ell_j}}] (\bar{A} + \bar{B} L_{k-{\ell_k}})^\top \nonumber\\
      &\quad\qquad + (\tilde{A}_{i_{{\ell_i}+1}} + \tilde{B}_{i_{{\ell_i}+1}} L_{k-{\ell_k}}) \Sigma[x_{k-{\ell_k}} p_{i_1} \ldots p_{i_{{\ell_i}+1}}, p_{j_1} \ldots p_{j_{\ell_j}}] v_{k-{\ell_k}}^\top \bar{B}^\top  \nonumber\\
     &\qquad + \sum_{j_{{\ell_j}+1}=1}^{n_p} ((\tilde{A}_{i_{{\ell_i}+1}} + \tilde{B}_{i_{{\ell_i}+1}} L_{k-{\ell_k}}) \Sigma[x_{k-{\ell_k}} p_{i_1} \ldots p_{i_{{\ell_i}+1}}, x_{k-{\ell_k}} p_{j_1} \ldots p_{j_{{\ell_j}+1}}] (\tilde{A}_{j_{{\ell_j}+1}} + \tilde{B}_{j_{{\ell_j}+1}} L_{k-{\ell_k}})^\top \nonumber\\
      &\quad\qquad + (\tilde{A}_{i_{{\ell_i}+1}} + \tilde{B}_{i_{{\ell_i}+1}} L_{k-{\ell_k}}) \Sigma[x_{k-{\ell_k}} p_{i_1} \ldots p_{i_{{\ell_i}+1}}, p_{j_1} \ldots p_{j_{{\ell_j}+1}}] v_{k-{\ell_k}}^\top \tilde{B}_{j_{{\ell_j}+1}}^\top)) \nonumber\\
    &\quad + \sum_{i_{{\ell_i}+1}=1}^{n_p} (\tilde{B}_{i_{{\ell_i}+1}} v_{k-{\ell_k}} \Sigma[p_{i_1} \ldots p_{i_{{\ell_i}+1}}, x_{k-{\ell_k}} p_{j_1} \ldots p_{j_{\ell_j}}] (\bar{A} + \bar{B} L_{k-{\ell_k}})^\top \nonumber\\
      &\quad\qquad + \tilde{B}_{i_{{\ell_i}+1}} v_{k-{\ell_k}} \Sigma[p_{i_1} \ldots p_{i_{{\ell_i}+1}}, p_{j_1} \ldots p_{j_{{\ell_j}}}] v_{k-{\ell_k}}^\top \bar{B}^\top  \nonumber\\
     &\qquad + \sum_{j_{{\ell_j}+1}=1}^{n_p} (\tilde{B}_{i_{{\ell_i}+1}} v_{k-{\ell_k}} \Sigma[p_{i_1} \ldots p_{i_{{\ell_i}+1}}, x_{k-{\ell_k}} p_{j_1} \ldots p_{j_{{\ell_j}+1}}] (\tilde{A}_{j_{{\ell_j}+1}} + \tilde{B}_{j_{{\ell_j}+1}} L_{k-{\ell_k}})^\top \nonumber\\
      &\quad\qquad + \tilde{B}_{i_{{\ell_i}+1}} v_{k-{\ell_k}} \Sigma[p_{i_1} \ldots p_{i_{{\ell_i}+1}}, p_{j_1} \ldots p_{j_{{\ell_j}+1}}] v_{k-{\ell_k}}^\top \tilde{B}_{j_{{\ell_j}+1}}^\top)) \nonumber\\
    &\quad + \sum_{i_{{\ell_i}+1}=1}^{n_p} (\tilde{D}_{i_{{\ell_i}+1}} \Expectation[p_{i_1} \ldots p_{i_{{\ell_i}+1}} p_{j_1} \ldots p_{j_{\ell_j}}] \bar{D}^\top + \sum_{j_{{\ell_j}+1}=1}^{n_p} (\tilde{D}_{i_{{\ell_i}+1}} \Expectation[p_{i_1} \ldots p_{i_{{\ell_i}+1}} p_{j_1} \ldots p_{j_{{\ell_j}+1}}] \tilde{D}_{j_{{\ell_j}+1}}^\top)),
\end{align}
where 
\begin{align} 
    \Sigma&[x_{k+1-{\ell_k}} p_{i_1} \ldots p_{i_{\ell_i}}, p_{j_1} \ldots p_{j_{\ell_j}}] = \Expectation[x_{k+1-{\ell_k}} p_{i_1} \ldots p_{i_{\ell_i}} p_{j_1} \ldots p_{j_{\ell_j}}] - \Expectation[x_{k+1-{\ell_k}} p_{i_1} \ldots p_{i_{\ell_i}}] \Expectation[p_{j_1} \ldots p_{j_{\ell_j}}] \nonumber\\
    &= \Expectation[\sigma[x_{k+1-{\ell_k}} p_{i_1} \ldots p_{i_{\ell_i}}] \sigma[p_{j_1} \ldots p_{j_{\ell_j}}]] \nonumber\\
    &= (\bar{A} + \bar{B} L_{k-{\ell_k}}) \Expectation[\sigma[x_{k-{\ell_k}} p_{i_1} \ldots p_{i_{\ell_i}}] \sigma[p_{j_1} \ldots p_{j_{\ell_j}}]] + \bar{B} v_{k-{\ell_k}} \Expectation[\sigma[p_{i_1} \ldots p_{i_{{\ell_i}}}] \sigma[p_{j_1} \ldots p_{j_{\ell_j}}]] \nonumber\\
    &\qquad + \bar{D} \Expectation[w_{k-{\ell_k}} p_{i_1} \ldots p_{i_{{\ell_i}}} \sigma[p_{j_1} \ldots p_{j_{\ell_j}}]] \nonumber\\
     &\quad + \sum_{i_{{\ell_i}+1}=1}^{n_p} (\tilde{A}_{i_{{\ell_i}+1}} + \tilde{B}_{i_{{\ell_i}+1}} L_{k-{\ell_k}}) \Expectation[\sigma[x_{k-{\ell_k}} p_{i_1} \ldots p_{i_{\ell_i}} p_{i_{{\ell_i}+1}}] \sigma[p_{j_1} \ldots p_{j_{\ell_j}}]] \nonumber\\
     &\qquad + \tilde{B}_{i_{{\ell_i}+1}} v_{k-{\ell_k}} \Expectation[\sigma[p_{i_1} \ldots p_{i_{\ell_i}} p_{i_{i+1}}] \sigma[p_{j_1} \ldots p_{j_{\ell_j}}]] + \tilde{D}_{i_{{\ell_i}+1}} \Expectation[w_{k-{\ell_k}} p_{i_1} \ldots p_{i_{\ell_i}} p_{i_{{\ell_i}+1}} \sigma[p_{j_1} \ldots p_{j_{\ell_j}}]] \nonumber\\
\end{align}
and after taking the expectations,
\begin{align} \label{eq:p_cov_dif_eq}
    \Sigma&[x_{k+1-{\ell_k}} p_{i_1} \ldots p_{i_{\ell_i}}, p_{j_1} \ldots p_{j_{\ell_j}}] \nonumber\\
    &= h(\{\Sigma[x_{k-{\ell_k}} p_{i_1} \ldots p_{i_{n_i}}, p_{j_1} \ldots p_{j_{n_j}}], \Sigma[p_{i_1} \ldots p_{i_{n_i}}, p_{j_1} \ldots p_{j_{n_j}}]\}_{n_i={\ell_i}, n_j={\ell_j}}^{{\ell_i}+1, {\ell_j}+1}, v_{k-{\ell_k}}) \nonumber\\
    &= (\bar{A} + \bar{B} L_{k-{\ell_k}}) \Sigma[x_{k-{\ell_k}} p_{i_1} \ldots p_{i_{\ell_i}}, p_{j_1} \ldots p_{j_{\ell_j}}] + \bar{B} v_{k-{\ell_k}} \Sigma[p_{i_1} \ldots p_{i_{{\ell_i}}}, p_{j_1} \ldots p_{j_{\ell_j}}] \nonumber\\
     &\quad + \sum_{i_{{\ell_i}+1}=1}^{n_p} (\tilde{A}_{i_{{\ell_i}+1}} + \tilde{B}_{i_{{\ell_i}+1}} L_{k-{\ell_k}}) \Sigma[x_{k-{\ell_k}} p_{i_1} \ldots p_{i_{\ell_i}} p_{i_{{\ell_i}+1}}, p_{j_1} \ldots p_{j_{\ell_j}}] \nonumber\\
     &\qquad + \tilde{B}_{i_{{\ell_i}+1}} v_{k-{\ell_k}} \Sigma[p_{i_1} \ldots p_{i_{\ell_i}} p_{i_{{\ell_i}+1}}, p_{j_1} \ldots p_{j_{\ell_j}}].
\end{align}


\section*{Appendix~C}

\section*{Linear Moment Equations} \label{app:C}
\setcounter{section}{3}

Notice that the linearized expectation constraint \eqref{eq:mean_dif_eq} is given by 
\begin{align} \label{eq:mean_dif_eq_lin}
    \mu&[x_{k+1-{\ell_j}} p_{j_1} \ldots p_{j_{{\ell_j}}}] \nonumber\\
    &=  \bar{f}(\{\mu[x_{k-{\ell_j}} p_{j_1} \ldots p_{j_{n_j}}], \Expectation[p_{j_1} \ldots p_{j_{n_j}}] \}_{n_j = {\ell_j}}^{{\ell_j}+1},  L_{k-{\ell_j}}, v_{k-{\ell_j}}, \{\hat{\mu}[x_{k-{\ell_j}} p_{j_1} \ldots p_{j_{n_j}}] \}_{n_j = {\ell_j}}^{{\ell_j}+1}, \hat{L}_{k-{\ell_j}}) \nonumber\\
    &= \bar{A} \mu[x_{k-{\ell_j}} p_{j_1} \ldots p_{j_{{\ell_j}}}] + \bar{B} v_{k-{\ell_j}} \Expectation[p_{j_1} \ldots p_{j_{{\ell_j}}}] \nonumber\\
    &\quad + \bar{B} (\hat{L}_{k-{\ell_j}} (\mu[x_{k-{\ell_j}} p_{j_1} \ldots p_{j_{{\ell_j}}}] - \hat{\mu}[x_{k-{\ell_j}} p_{j_1} \ldots p_{j_{{\ell_j}}}]) + (L_{k-{\ell_j}} - \hat{L}_{k-{\ell_j}}) \hat{\mu}[x_{k-{\ell_j}} p_{j_1} \ldots p_{j_{{\ell_j}}}] \nonumber\\
    &\qquad+ \hat{L}_{k-{\ell_j}} \hat{\mu}[x_{k-{\ell_j}} p_{j_1} \ldots p_{j_{{\ell_j}}}]) \nonumber\\
    &+ \sum_{j_{{\ell_j}+1}=1}{n_p} \tilde{A}_{j_{{\ell_j}+1}} \mu[x_{k-{\ell_j}} p_{j_1} \ldots p_{j_{{\ell_j}}} p_{j_{{\ell_j}+1}}] + \tilde{B}_{j_{{\ell_j}+1}} v_{k-{\ell_j}} \Expectation[p_{j_1} \ldots p_{j_{{\ell_j}}} p_{j_{{\ell_j}+1}}] \nonumber\\
    &\quad + \tilde{B}_{j_{{\ell_j}+1}} (\hat{L}_{k-{\ell_j}} (\mu[x_{k-{\ell_j}} p_{j_1} \ldots p_{j_{{\ell_j}+1}}] - \hat{\mu}[x_{k-{\ell_j}} p_{j_1} \ldots p_{j_{{\ell_j}+1}}]) \nonumber\\
    &\qquad + (L_{k-{\ell_j}} - \hat{L}_{k-{\ell_j}}) \mu[x_{k-{\ell_j}} p_{j_1} \ldots p_{j_{{\ell_j}+1}}] + \hat{L}_{k-{\ell_j}} \hat{\mu}[x_{k-{\ell_j}} p_{j_1} \ldots p_{j_{{\ell_j}+1}}]),
\end{align}
where $k = 0, \dots, N-1$, $j_n = 1, \dots, {n_p}$, ${\ell_j} = 0, \dots, k$ and where $\hat{L}_{k-{\ell_j}}$ and $\hat{\mu}[x_{k-{\ell_j}} p_{j_1} \ldots p_{j_{{\ell_j}}}]$ are the linearization points about the decision variables $L_{k-{\ell_j}}$ and $\mu[x_{k-{\ell_j}} p_{j_1} \ldots p_{j_{{\ell_j}}}]$, respectively.
Next, similar to \eqref{eq:mean_dif_eq_lin}, a local convex approximation of \eqref{eq:p_cov_dif_eq} is given by 
\begin{align} \label{eq:p_cov_dif_eq_lin}
    \Sigma&[x_{k+1-{\ell_k}} p_{i_1} \ldots p_{i_{\ell_i}}, p_{j_1} \ldots p_{j_{\ell_j}}] = \bar{h}\big(\{\Sigma[x_{k-{\ell_k}} p_{i_1} \ldots p_{i_{n_i}}, p_{j_1} \ldots p_{j_{n_j}}], \Sigma[p_{i_1} \ldots p_{i_{n_i}}, p_{j_1} \ldots p_{j_{n_j}}]\}_{n_i = {\ell_i}, n_j = {\ell_j}}^{{\ell_i}+1, {\ell_j}+1},\nonumber\\
    &\qquad v_{k-{\ell_k}}, L_{k-{\ell_k}}, \{\hat{\Sigma}[x_{k-{\ell_k}} p_{i_1} \ldots p_{i_{n_i}}, p_{j_1} \ldots p_{j_{n_j}}]\}_{n_i = {\ell_i}, n_j = {\ell_j}}^{{\ell_i}+1, {\ell_j}+1}, \hat{L}_{k-{\ell_k}}\big) \nonumber\\
    &= \bar{A} \Sigma[x_{k-{\ell_k}} p_{i_{\ell_i}}, p_{j_{\ell_j}}] + \bar{B} v_{k-{\ell_k}} \Sigma[p_{i_1} \ldots p_{i_{{\ell_i}}}, p_{j_1} \ldots p_{j_{\ell_j}}] \nonumber\\
    &\qquad + \bar{B} (\hat{L}_{k-{\ell_k}} (\Sigma[x_{k-{\ell_k}} p_{i_{\ell_i}}, p_{j_{\ell_j}}] - \hat{\Sigma}[x_{k-{\ell_k}} p_{i_{\ell_i}}, p_{j_{\ell_j}}]) + (L_{k-{\ell_k}} - \hat{L}_{k-{\ell_k}}) \hat{\Sigma}[x_{k-{\ell_k}} p_{i_{\ell_i}}, p_{j_{\ell_j}}] \nonumber\\
    &\qquad\quad  + \hat{L}_{k-{\ell_k}} \hat{\Sigma}[x_{k-{\ell_k}} p_{i_{\ell_i}}, p_{j_{\ell_j}}]) \nonumber\\
     &\quad + \sum_{i_{{\ell_i}+1}=1}{n_p} (\tilde{A}_{i_{{\ell_i}+1}} \Sigma[x_{k-{\ell_k}} p_{i_{\ell_i}} p_{i_{{\ell_i}+1}}, p_{j_{\ell_j}}] + \tilde{B}_{i_{{\ell_i}+1}} v_{k-{\ell_k}} \Sigma[p_{i_1} \ldots p_{i_{\ell_i}} p_{i_{{\ell_i}+1}}, p_{j_1} \ldots p_{j_{\ell_j}}] \nonumber\\
     &\qquad + \tilde{B}_{i_{{\ell_i}+1}} (\hat{L}_{k-{\ell_k}} (\Sigma[x_{k-{\ell_k}} p_{i_{{\ell_i}+1}}, p_{j_{\ell_j}}] - \hat{\Sigma}[x_{k-{\ell_k}} p_{i_{{\ell_i}+1}}, p_{j_{\ell_j}}]) + (L_{k-{\ell_k}} - \hat{L}_{k-{\ell_k}}) \hat{\Sigma}[x_{k-{\ell_k}} p_{i_{{\ell_i}+1}}, p_{j_{\ell_j}}] \nonumber\\
     &\qquad\quad + \hat{L}_{k-{\ell_k}} \hat{\Sigma}[x_{k-{\ell_k}} p_{i_{{\ell_i}+1}}, p_{j_{\ell_j}}],
\end{align}
where $k = 0, \dots, N-1$, $[i_n, j_n] = 1, \dots, {n_p}$, $[{\ell_i}, {\ell_j}] = 0, \dots, k$ and where, by a slight abuse of notation, we write $\Sigma[x_{k-{\ell_k}} p_{i_{{\ell_i}}}, p_{j_{\ell_j}}] = \Sigma[x_{k-{\ell_k}} p_{i_1} \ldots p_{i_{{\ell_i}}}, p_{j_1} \ldots p_{j_{\ell_j}}]$ for brevity, and where $\hat{\Sigma}[x_{k-{\ell_k}} p_{i_{{\ell_i}}}, p_{j_{\ell_j}}]$ are the linearization points of the decision variables $\Sigma[x_{k-{\ell_k}} p_{i_{{\ell_i}}}, p_{j_{\ell_j}}]$.
A local convex approximation of \eqref{eq:state_cov_dif_eq} is given by
\begin{align} \label{eq:state_cov_dif_eq_lin}
    \Sigma&[x_{k+1-{\ell_k}} p_{i_{\ell_i}}, x_{k+1-{\ell_k}} p_{j_{\ell_j}}] = \bar{g}\big(\{\Sigma[x_{k-{\ell_k}} p_{i_{n_i}}, x_{k-{\ell_k}} p_{j_{n_j}}], \Sigma[x_{k-{\ell_k}} p_{i_{n_i}, p_{j_{n_j}}}], \Sigma[p_{i_{n_i}}, p_{j_{n_j}}], \nonumber\\
    &\qquad\Expectation[p_{i_1} \ldots p_{i_{n_i}} p_{j_1} \ldots p_{j_{n_j}}]\}_{n_i = {\ell_i}, n_j = {\ell_j}}^{{\ell_i}+1, {\ell_j}+1}, L_{k-{\ell_k}}, v_{k-{\ell_k}},\nonumber\\
    &\qquad\{\hat{\Sigma}[x_{k-{\ell_k}} p_{i_{n_i}}, x_{k-{\ell_k}} p_{j_{n_j}}], \hat{\Sigma}[x_{k-{\ell_k}} p_{i_{n_i}, p_{j_{n_j}}}]\}_{n_i = {\ell_i}, n_j = {\ell_j}}^{{\ell_i}+1, {\ell_j}+1}, \hat{L}_{k-{\ell_k}}, \hat{v}_{k-{\ell_k}} \big) \nonumber\\
    &= \bar{A} \Sigma[x_{k-{\ell_k}} p_{i_{\ell_i}}, x_{k-{\ell_k}} p_{j_{\ell_j}}] \bar{A}^\top + \bar{A} \lin(\Sigma[x_{k-{\ell_k}} p_{i_{\ell_i}}, x_{k-{\ell_k}} p_{j_{\ell_j}}], L_{k-{\ell_k}}^\top) \bar{B}^\top \nonumber\\
    &\qquad\qquad + \bar{B} \lin(L_{k-{\ell_k}}, \Sigma[x_{k-{\ell_k}} p_{i_{\ell_i}}, x_{k-{\ell_k}} p_{j_{\ell_j}}]) \bar{A}^\top + \bar{B} \lin(L_{k-{\ell_k}}, \Sigma[x_{k-{\ell_k}} p_{i_{\ell_i}}, x_{k-{\ell_k}} p_{j_{\ell_j}}], L_{k-{\ell_k}}^\top) \bar{B}^\top \nonumber\\
    &\qquad\quad + \bar{A} \lin(\Sigma[x_{k-{\ell_k}} p_{i_{\ell_i}}, p_{j_{\ell_j}}], v_{k-{\ell_k}}^\top) \bar{B}^\top + \bar{B} \lin(L_{k-{\ell_k}}, \Sigma[x_{k-{\ell_k}} p_{i_{\ell_i}}, p_{j_{\ell_j}}], v_{k-{\ell_k}}^\top) \bar{B}^\top \nonumber\\
     &\qquad + \sum_{j_{{\ell_j}+1}=1}{n_p} (\bar{A} \Sigma[x_{k-{\ell_k}} p_{i_{\ell_i}}, x_{k-{\ell_k}} p_{j_{{\ell_j}+1}}] \tilde{A}_{j_{{\ell_j}+1}}^\top + \bar{A} \lin(\Sigma[x_{k-{\ell_k}} p_{i_{\ell_i}}, x_{k-{\ell_k}} p_{j_{{\ell_j}+1}}], L_{k-{\ell_k}}^\top) \tilde{B}_{j_{{\ell_j}+1}}^\top \nonumber\\
     &\qquad \qquad + \bar{B} \lin(L_{k-{\ell_k}}, \Sigma[x_{k-{\ell_k}} p_{i_{\ell_i}}, x_{k-{\ell_k}} p_{j_{{\ell_j}+1}}]) \tilde{A}_{j_{{\ell_j}+1}}^\top \nonumber\\
     &\qquad \qquad\quad + \bar{B} \lin(L_{k-{\ell_k}}, \Sigma[x_{k-{\ell_k}} p_{i_{\ell_i}}, x_{k-{\ell_k}} p_{j_{{\ell_j}+1}}], L_{k-{\ell_k}}^\top) \tilde{B}_{j_{{\ell_j}+1}}^\top \nonumber\\
      &\qquad\quad + \bar{A} \lin(\Sigma[x_{k-{\ell_k}} p_{i_{\ell_i}}, p_{j_{{\ell_j}+1}}], v_{k-{\ell_k}}^\top) \tilde{B}_{j_{{\ell_j}+1}}^\top + \bar{B} \lin(L_{k-{\ell_k}}, \Sigma[x_{k-{\ell_k}} p_{i_{\ell_i}}, p_{j_{{\ell_j}+1}}], v_{k-{\ell_k}}^\top) \tilde{B}_{j_{{\ell_j}+1}}^\top) \nonumber\\
    &\quad + \bar{B} \lin(v_{k-{\ell_k}}, \Sigma[p_{i_{\ell_i}}, x_{k-{\ell_k}} p_{j_{\ell_j}}]) \bar{A} + \bar{B} \lin(v_{k-{\ell_k}}, \Sigma[p_{i_{\ell_i}}, x_{k-{\ell_k}} p_{j_{\ell_j}}], L_{k-{\ell_k}}^\top) \bar{B}^\top \nonumber\\
    &\qquad \quad + \bar{B} ((v_{k-{\ell_k}} - \hat{v}_{k-{\ell_k}}) \Sigma[p_{i_{\ell_i}}, p_{j_{\ell_j}}] \hat{v}_{k-{\ell_k}}^\top + \hat{v}_{k-{\ell_k}} \Sigma[p_{i_{\ell_i}}, p_{j_{\ell_j}}] (v_{k-{\ell_k}}^\top - \hat{v}_{k-{\ell_k}}^\top) + \hat{v}_{k-{\ell_k}} \Sigma[p_{i_{\ell_i}}, p_{j_{\ell_j}}] \hat{v}_{k-{\ell_k}}^\top) \bar{B}^\top \nonumber\\
     &\qquad + \sum_{j_{{\ell_j}+1}=1}{n_p} (\bar{B} \lin(v_{k-{\ell_k}}, \Sigma[p_{i_{\ell_i}}, x_{k-{\ell_k}} p_{j_{{\ell_j}+1}}]) \tilde{A}_{j_{{\ell_j}+1}}^\top + \bar{B} \lin(v_{k-{\ell_k}}, \Sigma[p_{i_{\ell_i}}, x_{k-{\ell_k}} p_{j_{{\ell_j}+1}}], L_{k-{\ell_k}}^\top) \tilde{B}_{j_{{\ell_j}+1}}^\top \nonumber\\
      &\qquad\quad + \bar{B} ((v_{k-{\ell_k}} - \hat{v}_{k-{\ell_k}}) \Sigma[p_{i_{\ell_i}}, p_{j_{{\ell_j}+1}}] \hat{v}_{k-{\ell_k}}^\top + \hat{v}_{k-{\ell_k}} \Sigma[p_{i_{\ell_i}}, p_{j_{{\ell_j}+1}}] (v_{k-{\ell_k}}^\top - \hat{v}_{k-{\ell_k}}^\top) \nonumber\\
      &\qquad\qquad+ \hat{v}_{k-{\ell_k}} \Sigma[p_{i_{\ell_i}}, p_{j_{{\ell_j}+1}}] \hat{v}_{k-{\ell_k}}^\top) \tilde{B}_{j_{{\ell_j}+1}}^\top) \nonumber\\
    &\quad + \bar{D} \Expectation[p_{i_1} \ldots p_{i_{\ell_i}} p_{j_1} \ldots p_{j_{\ell_j}}] \bar{D}^\top + \sum_{j_{{\ell_j}+1}=1}{n_p} (\bar{D} \Expectation[p_{i_1} \ldots p_{i_{\ell_i}} p_{j_1} \ldots p_{j_{\ell_j}} p_{j_{{\ell_j}+1}}] \tilde{D}_{j_{{\ell_j}+1}}^\top) \nonumber\\
    &\quad + \sum_{i_{{\ell_i}+1}=1}{n_p} (\tilde{A}_{i_{{\ell_i}+1}} \Sigma[x_{k-{\ell_k}} p_{i_{{\ell_i}+1}}, x_{k-{\ell_k}} p_{j_{\ell_j}}] \bar{A}^\top + \tilde{A}_{i_{{\ell_i}+1}} \lin(\Sigma[x_{k-{\ell_k}} p_{i_{{\ell_i}+1}}, x_{k-{\ell_k}} p_{j_{\ell_j}}], L_{k-{\ell_k}}^\top) \bar{B}^\top \nonumber\\
    &\qquad \qquad + \tilde{B}_{i_{{\ell_i}+1}} \lin(L_{k-{\ell_k}}, \Sigma[x_{k-{\ell_k}} p_{i_{{\ell_i}+1}}, x_{k-{\ell_k}} p_{j_{\ell_j}}]) \bar{A}^\top \nonumber\\
    &\qquad\qquad\quad + \tilde{B}_{i_{{\ell_i}+1}} \lin(L_{k-{\ell_k}}, \Sigma[x_{k-{\ell_k}} p_{i_{{\ell_i}+1}}, x_{k-{\ell_k}} p_{j_{\ell_j}}], L_{k-{\ell_k}}^\top) \bar{B}^\top \nonumber\\
      &\quad\qquad + \tilde{A}_{i_{{\ell_i}+1}} \lin(\Sigma[x_{k-{\ell_k}} p_{i_{{\ell_i}+1}}, p_{j_{\ell_j}}], v_{k-{\ell_k}}^\top) \bar{B}^\top + \tilde{B}_{i_{{\ell_i}+1}} \lin(L_{k-{\ell_k}}, \Sigma[x_{k-{\ell_k}} p_{i_{{\ell_i}+1}}, p_{j_{\ell_j}}], v_{k-{\ell_k}}^\top) \bar{B}^\top \nonumber\\
     &\qquad + \sum_{j_{{\ell_j}+1}=1}{n_p} (\tilde{A}_{i_{{\ell_i}+1}} \Sigma[x_{k-{\ell_k}} p_{i_{{\ell_i}+1}}, x_{k-{\ell_k}} p_{j_{{\ell_j}+1}}] \tilde{A}_{j_{{\ell_j}+1}}^\top + \tilde{A}_{i_{{\ell_i}+1}} \lin(\Sigma[x_{k-{\ell_k}} p_{i_{{\ell_i}+1}}, x_{k-{\ell_k}} p_{j_{{\ell_j}+1}}], L_{k-{\ell_k}}^\top) \tilde{B}_{j_{{\ell_j}+1}}^\top \nonumber\\
     &\qquad \qquad + \tilde{B}_{i_{{\ell_i}+1}} \lin(L_{k-{\ell_k}}, \Sigma[x_{k-{\ell_k}} p_{i_{{\ell_i}+1}}, x_{k-{\ell_k}} p_{j_{{\ell_j}+1}}]) \tilde{A}_{j_{{\ell_j}+1}}^\top \nonumber\\
     &\qquad\qquad\quad + \tilde{B}_{i_{{\ell_i}+1}} \lin(L_{k-{\ell_k}}, \Sigma[x_{k-{\ell_k}} p_{i_{{\ell_i}+1}}, x_{k-{\ell_k}} p_{j_{{\ell_j}+1}}], L_{k-{\ell_k}}^\top) \tilde{B}_{j_{{\ell_j}+1}}^\top \nonumber\\
      &\quad\qquad + \tilde{A}_{i_{{\ell_i}+1}} \lin(\Sigma[x_{k-{\ell_k}} p_{i_{{\ell_i}+1}}, p_{j_{{\ell_j}+1}}], v_{k-{\ell_k}}^\top) \tilde{B}_{j_{{\ell_j}+1}}^\top + \tilde{B}_{i_{{\ell_i}+1}} \lin(L_{k-{\ell_k}}, \Sigma[x_{k-{\ell_k}} p_{i_{{\ell_i}+1}}, p_{j_{{\ell_j}+1}}], v_{k-{\ell_k}}^\top) \tilde{B}_{j_{{\ell_j}+1}}^\top)) \nonumber\\
    &\quad + \sum_{i_{{\ell_i}+1}=1}{n_p} (\tilde{B}_{i_{{\ell_i}+1}} \lin(v_{k-{\ell_k}}, \Sigma[p_{i_{{\ell_i}+1}}, x_{k-{\ell_k}} p_{j_{\ell_j}}]) \bar{A}^\top + \tilde{B}_{i_{{\ell_i}+1}} \lin(v_{k-{\ell_k}}, \Sigma[p_{i_{{\ell_i}+1}}, x_{k-{\ell_k}} p_{j_{\ell_j}}], L_{k-{\ell_k}}^\top) \bar{B}^\top \nonumber\\
      &\quad\qquad + \tilde{B}_{i_{{\ell_i}+1}} ((v_{k-{\ell_k}} - \hat{v}_{k-{\ell_k}}) \Sigma[p_{i_{{\ell_i}+1}}, p_{j_{{\ell_j}}}] \hat{v}_{k-{\ell_k}}^\top + \hat{v}_{k-{\ell_k}} \Sigma[p_{i_{{\ell_i}+1}}, p_{j_{{\ell_j}}}] (v_{k-{\ell_k}}^\top - \hat{v}_{k-{\ell_k}}^\top) \nonumber\\
      &\qquad\qquad+ \hat{v}_{k-{\ell_k}} \Sigma[p_{i_{{\ell_i}+1}}, p_{j_{{\ell_j}}}] \hat{v}_{k-{\ell_k}}^\top) \bar{B}^\top  \nonumber\\
     &\qquad + \sum_{j_{{\ell_j}+1}=1}{n_p} (\tilde{B}_{i_{{\ell_i}+1}} \lin(v_{k-{\ell_k}}, \Sigma[p_{i_{{\ell_i}+1}}, x_{k-{\ell_k}} p_{j_{{\ell_j}+1}}]) \tilde{A}_{j_{{\ell_j}+1}}^\top \nonumber\\
     &\qquad\quad + \tilde{B}_{i_{{\ell_i}+1}} \lin(v_{k-{\ell_k}}, \Sigma[p_{i_{{\ell_i}+1}}, x_{k-{\ell_k}} p_{j_{{\ell_j}+1}}], L_{k-{\ell_k}}^\top) \tilde{B}_{j_{{\ell_j}+1}}^\top  \nonumber\\
      &\quad\qquad + \tilde{B}_{i_{{\ell_i}+1}} ((v_{k-{\ell_k}} - \hat{v}_{k-{\ell_k}}) \Sigma[p_{i_{{\ell_i}+1}}, p_{j_{{\ell_j}+1}}] \hat{v}_{k-{\ell_k}}^\top + \hat{v}_{k-{\ell_k}} \Sigma[p_{i_{{\ell_i}+1}}, p_{j_{{\ell_j}+1}}] (v_{k-{\ell_k}}^\top - \hat{v}_{k-{\ell_k}}) \nonumber\\
      &\qquad\qquad+ \hat{v}_{k-{\ell_k}} \Sigma[p_{i_{{\ell_i}+1}}, p_{j_{{\ell_j}+1}}] \hat{v}_{k-{\ell_k}}^\top) \tilde{B}_{j_{{\ell_j}+1}}^\top)) \nonumber\\
    &\quad + \sum_{i_{{\ell_i}+1}=1}{n_p} (\tilde{D}_{i_{{\ell_i}+1}} \Expectation[p_{i_1} \ldots p_{i_{{\ell_i}+1}} p_{j_1} \ldots p_{j_{\ell_j}}] \bar{D}^\top + \sum_{j_{{\ell_j}+1}=1}^{n_p} (\tilde{D}_{i_{{\ell_i}+1}} \Expectation[p_{i_1} \ldots p_{i_{{\ell_i}+1}} p_{j_1} \ldots p_{j_{{\ell_j}+1}}] \tilde{D}_{j_{{\ell_j}+1}}^\top)),
\end{align}
where $k = 0, \dots, N-1$, $[i, j] = 1, \dots, {n_p}$, $[{\ell_i}, {\ell_j}] = 0, \dots, k$, and ${\ell_k} = \max({\ell_i}, {\ell_j})$ and where $\lin(x, y, z) = (x - \hat{x}) \hat{y} \hat{z} + \hat{x} (y - \hat{y}) \hat{z} + \hat{x} \hat{y} (z - \hat{z}) + \hat{x} \hat{y} \hat{z}$ and by a slight abuse of notation $\lin(x, y) = (x - \hat{x}) \hat{y} + \hat{x} (y - \hat{y}) + \hat{x} \hat{y}$, and where, for brevity, we write $\Sigma[x_{k-{\ell_k}} p_{i_{\ell_i}}, x_{k-{\ell_k}} p_{j_{\ell_j}}] = \Sigma[x_{k-{\ell_k}} p_{i_1} \ldots p_{i_{\ell_i}}, x_{k-{\ell_k}} p_{i_1} \ldots p_{j_{\ell_j}}]$ and $\Sigma[p_{i_{\ell_i}}, p_{j_{\ell_j}}] = \Sigma[p_{i_1} \ldots p_{i_{\ell_i}}, p_{i_1} \ldots p_{j_{\ell_j}}]$.

\end{document}